\theoremstyle{plain}%
\newtheorem{theorem}{Theorem}[section]
\newtheorem{lemma}[theorem]{Lemma}
\newtheorem{corollary}[theorem]{Corollary}
\theoremstyle{plain}%
\newtheorem*{remark:unnumbered}[theorem]{Remark}%
\newtheorem{defn}[theorem]{Definition}
\newtheorem{problem}[theorem]{Problem}
\newcommand{\myqedsymbol}{\rule{2mm}{2mm}}
\theoremstyle{nonumberplain}%
\newtheorem{proof}{Proof:}%
\newcommand{\atgen}{\symbol{'100}}
\newcommand{\SarielThanks}[1]{\thanks{Department of Computer Science;
      University of Illinois; 201 N. Goodwin Avenue; Urbana, IL,
      61801, USA; {\tt sariel\atgen{}illinois.edu}; {\tt
         \url{http://sarielhp.org/}.} #1}}
\newcommand{\HLink}[2]{\hyperref[#2]{#1~\ref*{#2}}}
\newcommand{\HLinkSuffix}[3]{\hyperref[#2]{#1\ref*{#2}{#3}}}
\newcommand{\figlab}[1]{\label{fig:#1}}
\newcommand{\figref}[1]{\HLink{Figure}{fig:#1}}
\newcommand{\thmlab}[1]{{\label{theo:#1}}}
\newcommand{\thmref}[1]{\HLink{Theorem}{theo:#1}}
\newcommand{\corlab}[1]{\label{cor:#1}}
\newcommand{\corref}[1]{\HLink{Corollary}{cor:#1}}%
\newcommand{\defrefY}[2]{\hyperref[def:#2]{#1}}
\newcommand{\lemlab}[1]{\label{lemma:#1}}
\newcommand{\lemref}[1]{\HLink{Lemma}{lemma:#1}}%
\newcommand{\problab}[1]{\label{prob:#1}}%
\newcommand{\probref}[1]{\HLink{Problem}{prob:#1}}%
\newcommand{\apndlab}[1]{\label{apnd:#1}}
\newcommand{\apndref}[1]{\HLink{Appendix}{apnd:#1}}
\newcommand{\seclab}[1]{\label{sec:#1}}
\newcommand{\secref}[1]{\HLink{Section}{sec:#1}}
\providecommand{\eqlab}[1]{}%
\renewcommand{\eqlab}[1]{\label{equation:#1}}
\newcommand{\Set}[2]{\left\{ #1 \;\middle\vert\; #2 \right\}}
\newcommand{\pth}[2][\!]{\mleft({#2}\mright)}%
\newcommand{\ceil}[1]{\left\lceil {#1} \right\rceil}
\newcommand{\brc}[1]{\left\{ {#1} \right\}}
\newcommand{\cardin}[1]{\left| {#1} \right|}%
\renewcommand{\th}{th\xspace} %
\newlist{compactenume}{enumerate}{5}%
\setlist[compactenume]{topsep=0pt,itemsep=-1ex,partopsep=1ex,parsep=1ex,%
   label={\bf\arabic*.}}%
\newlist{compactenumA}{enumerate}{5}%
\setlist[compactenumA]{topsep=0pt,itemsep=-1ex,partopsep=1ex,parsep=1ex,%
   label=(\Alph*)}%
\newlist{compactenuma}{enumerate}{5}%
\setlist[compactenuma]{topsep=0pt,itemsep=-1ex,partopsep=1ex,parsep=1ex,%
   label=(\alph*)}%
\newlist{compactenumI}{enumerate}{5}%
\setlist[compactenumI]{topsep=0pt,itemsep=-1ex,partopsep=1ex,parsep=1ex,%
   label=(\Roman*)}%
\newlist{compactenumi}{enumerate}{5}%
\setlist[compactenumi]{topsep=0pt,itemsep=-1ex,partopsep=1ex,parsep=1ex,%
   label=(\roman*)}%
\newlist{compactenumi*}{enumerate*}{5}%
\setlist[compactenumi*]{topsep=0pt,itemsep=-1ex,partopsep=1ex,parsep=1ex,%
   label=(\roman*)}%
\newlist{compactitem}{itemize}{5}%
\setlist[compactitem]{topsep=0pt,itemsep=-1ex,partopsep=1ex,parsep=1ex,%
   label={{\textbullet}}}%
\newcommand{\hrefb}[3][black]{\href{#2}{\color{#1}{#3}}}%
\renewcommand{\Re}{\mathbb{R}}%
\newcommand{\pB}{{q}}%
\newcommand{\eps}{{\varepsilon}}%
\newcommand{\Line}{{\ell}}%
\newcommand{\Spl}{{S}}%
\providecommand{\si}[1]{#1}
\newcommand{\tldO}{\scalerel*{\widetilde{O}}{j^2}}%
\newcommand{\areaX}[1]{\mathrm{area}\pth{#1}}
\newcommand{\Alg}{\texttt{alg}\xspace}%
\newcommand{\Talg}{T_{\text{\Alg}}}
 \newcommand{\rect}{{R}}%
\newcommand{\etal}{\textit{et~al.}\xspace}
\newcommand{\optY}[2]{\mathrm{opt}\pth{#1, #2}}%
\newcommand{\Tconvol}{T_{\mbox{\scriptsize\rm convol}}}
\newcommand{\RectAll}{{\EuScript{R}}}%
\newcommand{\rmin}{\rect_{\mathrm{min}}}%
\newcommand{\refX}[1]{{\updownarrow}{#1}}%
\newcommand{\foldX}[1]{\cardin{#1}_y}
\newcommand{\SaveContent}[2]{%
   \expandafter\newcommand{#1}{#2}%
}
\newcommand{\RestatementOf}[2]{
   \noindent%
   \textbf{Restatement of #1.}
   {\em #2{}}%
} \newcommand{\Jorgensen}{J{\o}rgensen\xspace}%
\newcommand{\Family}{{\mathcal{F}}}%
\newcommand{\optMinArea}{\alpha^*}
\definecolor{blue25emph}{rgb}{0, 0, 11} \providecommand{\emphic}[2]{%
   \textcolor{blue25emph}{%
      \textbf{\emph{#1}}}%
   \index{#2}}
\providecommand{\emphi}[1]{\emphic{#1}{#1}}
\newcommand{\Puatracscu}{P\u{a}tra\c{s}cu\xspace}
\newcommand{\RR}{{\cal R}}%
\newcommand{\II}{{\cal I}}
\numberwithin{figure}{section}%
\numberwithin{table}{section}%
\numberwithin{equation}{section}%
\newcommand{\IGNORE}[1]{}
\newcommand{\areaVal}{{\alpha}}%
\newcommand{\QQ}{\tau}%
\providecommand{\Matousek}{Matou{\v s}ek\xspace}
\title{Smallest $k$-Enclosing Rectangle Revisited}
\author{%
   Timothy M. Chan\thanks{Department of Computer Science; University
      of Illinois; 201 N. Goodwin Avenue; Urbana, IL, 61801, USA; {\tt
         tmc\atgen{}illinois.edu}; Work was partially supported by an
      NSF AF award CCF-1814026.}%
   \and%
   Sariel Har-Peled\SarielThanks{Work on this paper was partially
      supported by a NSF AF awards CCF-1421231,
      and %
      CCF-1217462.  %
   }}
\date{\today}
\begin{document}

\maketitle

\begin{abstract}
    Given a set of $n$ points in the plane, and a parameter $k$, we
    consider the problem of computing the minimum (perimeter or area)
    axis-aligned rectangle enclosing $k$ points. We present the
    first near quadratic time algorithm for this problem, improving
    over the previous near-$O(n^{5/2})$-time algorithm by Kaplan \etal
    \cite{krs-faprf-17}. We provide an almost matching conditional
    lower bound, under the assumption that $(\min,+)$-convolution
    cannot be solved in truly subquadratic time.  Furthermore, we
    present a new reduction (for either perimeter or area) that can
    make the time bound sensitive to $k$, giving near $O(n k) $
    time. We also present a near linear time $(1+\eps)$-approximation
    algorithm to the minimum area of the optimal rectangle containing
    $k$ points. In addition, we study related problems including the
    $3$-sided, arbitrarily oriented, weighted, and subset sum versions
    of the problem.
\end{abstract}

\section{Introduction}

Given a set $ P$ of $n$ points in the plane, and a parameter $k$,
consider the problem of computing the smallest area/perimeter
axis-aligned rectangle that contains $k$ points of $ P$. (Unless
stated otherwise, rectangles are axis-aligned by default.)  This
problem and its variants have a long history.  Eppstein and Erickson
\cite{ee-innfm-94} studied an exhaustive number of variants of this
problem for various shapes.

For the minimum perimeter variant, the first work on this problem
seems to be Aggarwal \etal \cite{aiks-fkpmd-91}, who showed a brute
force algorithm with running time $O(n^3)$. Recently, Kaplan \etal
\cite{krs-faprf-17} gave an algorithm with running time
$O(n^{5/2}\log^2n)$ that works for both minimum perimeter and area.

Several works derived algorithms with running time sensitive to~$k$,
the number of points in the shape.
Aggarwal \etal~\cite{aiks-fkpmd-91} showed an algorithm for the
minimum perimeter with running time $O(k^2 n \log n)$. This was
improved to $O(n \log n + k^2 n )$ by Eppstein and Erickson
\cite{ee-innfm-94} or alternatively by Datta \etal~\cite{Datta}.
Kaplan \etal's algorithm \cite{krs-faprf-17} for the $k$-insensitive
case, coupled with these previous techniques \cite{ee-innfm-94,Datta},
results in an $O( n\log n + nk^{3/2} \log^2 k)$ running time, which is
currently the state of the art.

\renewcommand{\tldO}{\widetilde{O}} Known techniques
\cite{ee-innfm-94,Datta} reduce the problem to solving $O(n/k)$
instances of size $O(k)$. These reductions work only for the perimeter
case, not the area case -- in particular, there are incorrect
attributions in the literature to results on the minimum area
rectangle -- see the introduction of d{}e~Berg \etal
\cite{bccek-cmpsa-16} for details.  \si{De}~Berg \etal described an
algorithm with running time $O(n\log^2n + nk^2 \log n )$ for minimum
area. Both d{}e~Berg \etal \cite{bccek-cmpsa-16} and Kaplan \etal
\cite{krs-faprf-17} left as an open question whether there is a
reduction from the minimum-area problem to about $\tldO(n/k)$
instances of size $O(k)$, where $\tldO$ hides\footnote{We reserve the
   right, in the future, to use the $\tldO$ to hide any other things
   we do not like.}  polynomial factors in $\log n$ and $1/\eps$. Such
a reduction would readily imply an improved algorithm.

\paragraph*{Our results.}
We revisit the above problems and provide significantly improved
algorithms:

\begin{compactenumA}[resume=results]
    \item \textsf{Exact smallest $k$-enclosing rectangle.} In
    \secref{krect} we describe an algorithm for the minimum
    $k$-enclosing rectangle (either area or perimeter) with running
    time $O(n^2 \log n)$ (see \thmref{krect}).  It is based on a new
    divide-and-conquer approach, which is arguably simpler than Kaplan
    \etal's algorithm. Known reductions mentioned above then lead to
    an $O(n\log n + nk \log k)$-time algorithm for computing the
    minimum perimeter rectangle.

    \smallskip%
    \item \textsf{$k$-sensitive running time for smallest area.}  In
    \secref{reduction:k:sensitive} we describe a reduction of the
    minimum-area problem to $O( \tfrac{n}{k} \log \tfrac{n}{k})$
    instances of size $O(k)$ (see \thmref{k:s:line}).  Our reduction
    uses \emph{shallow cutting} for 3-sided rectangular ranges
    \cite{jl-rsmt-11}
    (see \apndref{shallow:cutting}) and is conceptually simple.

    Plugging this the aforementioned new $O(n^2\log n)$-time algorithm
    leads to $O(nk \log \tfrac{n}{k} \log k )$-time algorithm for
    computing the minimum area $k$-enclosing rectangle (see
    \corref{krect:sensitive:area}).  Thus, our new result strictly
    improves upon both Kaplan \etal's and de Berg \etal's results for
    all $k$, from constant to $\Theta(n)$.
\end{compactenumA}
\medskip%
The smallest enclosing rectangle problem is amenable to sampling.
Kaplan \etal\ used samples in an approximation algorithm, with running
time $\tldO( n/ k )$, that computes a rectangle containing at least
$(1-\eps)k$ points of a prescribed perimeter, where $k$ is the
maximum number of points in any such rectangle. Similarly, using
relative approximations \cite{hs-rag-11}, d{}e Berg \etal
\cite{bccek-cmpsa-16} showed an algorithm that computes, in $\tldO(n)$
time, a rectangle containing $\geq (1-\eps)k$ points, where $k$ is
the maximum number of points in any rectangle of a prescribed area.
The ``dual'' problem, of approximating the minimum area rectangle
containing $k$ points seems harder, since sampling does not directly
apply to it.  \medskip%
\begin{compactenumA}[resume=results]
    \item \textsf{Approximating the area of the smallest $k$-enclosing
       rectangle.} In \secref{min:area:approx}, we present an
    approximation algorithm that computes, in $O( n\log n )$ expected
    time, a rectangle containing $k$ points of area
    $\leq (1+\eps)\optMinArea$, for a constant $\eps \in (0,1)$, where
    $\optMinArea$ is the smallest-area of such a rectangle (see
    \thmref{smallest:area:approx}).
\end{compactenumA}
\medskip%

\noindent%
We next present a flotilla of related results: \smallskip%

\begin{compactenumA}[resume=results]
    \item \textsf{$3$-sided smallest $k$-enclosing rectangle.}  In
    \secref{3:sided} we (slightly) speed up the exact algorithm for
    the $3$-sided rectangles case (i.e., rectangles that must have
    their bottom edge on the $x$-axis). The running time is
    $O\bigl(n^2/2^{\Omega(\sqrt{\log n})} \bigr)$, and is obtained
    using known results on the \emph{(min,+)-convolution}
    problem~\cite{Bremner,w-fapsp-14} (see \thmref{krect:3sided}).

    \smallskip%
    \item \textsf{Arbitrarily oriented smallest $k$-enclosing
       rectangle.}  In \secref{arb:oriented:excluding} we briefly
    consider the variant where the rectangle may not be axis-aligned.
    We show that this problem can be solved in
    $O(n^3\log n + n^3k/2^{\Omega(\sqrt{\log k})})$ time, slightly
    improving a previous result of $O(n^3k)$~\cite{dgn-skper-05} when
    $k$ is not too small.

    \smallskip%
    \item \textsf{Minimum-weight $k$-enclosing rectangle.}  In
    \secref{krect:minwt} we show how to extend our $O(n^2\log n)$-time
    algorithm to the related problem of finding a minimum-weight
    rectangle that contains $k$ points, for $n$ given weighted points
    in the plane (see \thmref{krect:minwt}).

    \smallskip%
    \item \textsf{Subset sum for $k$-enclosing rectangle.} %
    In \secref{subset:sum:rect}, we study the problem of finding a
    rectangle that contains $k$ points and has a prescribed weight $W$
    (or as close as one can get to it). The running time of the new
    algorithm is $O(n^{5/2}\log n )$ (see \thmref{subset:sum}).

    \smallskip%
    \item \textsf{Conditional lower bound.}  In \secref{lower:bounds},
    we prove that our near quadratic algorithm for exact minimum
    (perimeter or area) $k$-enclosing rectangle is near optimal up to
    an arbitrarily small polynomial factor, under a ``popular''
    conjecture that the (min,+)-convolution problem cannot be solved
    in truly subquadratic time~\cite{cmww-pempc-17}.
\end{compactenumA}

\section{Smallest $k$-enclosing rectangle}

\subsection{An exact near-quadratic algorithm}%
\seclab{krect}

Our $O(n^2\log n)$-time algorithm for minimum $k$-enclosing rectangles
is based on divide-and-conquer.  It has some similarity with an
$O(n^2)$-time divide-and-conquer algorithm by Barbay \etal
\cite{bcnp-mwpb-14} for a different problem (finding the
minimum-weight rectangle for $n$ weighted points in the plane, without
any $k$-enclosing constraint), but the new algorithm requires more
ingenuity.

We start with a semi-dynamic data structure for a 1D subproblem:

\begin{lemma}
    \lemlab{1d}%
    Given a set $ P$ of $n$ points in 1D with $q$ \emph{marked}
    points, and an integer $k$, we can maintain an $O(q^2)$-space
    data structure, with $O(n\log n + nq)$ preprocessing time, that
    supports the following operations:
    
    \begin{compactitem}
        \item report the shortest interval containing $k$ points of
        $ P$ in $O(q)$ time;
        \item delete a marked point in $O(q)$ time;
        \item unmark a marked point in $O(q)$ time.
    \end{compactitem}
\end{lemma}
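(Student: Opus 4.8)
The plan is to reduce the query to one dimension and to exploit that only the $q$ marked points are ever modified. Sort $P$ in $O(n\log n)$ time as $p_1<\dots<p_n$; store prefix sums of the coordinates so that the length $p_b-p_a$ of the interval spanned by ranks $a\le b$ is returned in $O(1)$ time; and flag the marked points. The shortest interval containing $k$ points equals $\min_i (p_{i+k-1}-p_i)$, i.e.\ the minimum sum of $k-1$ consecutive inter-point gaps. The $n-q$ unmarked points split into at most $q+1$ maximal runs of consecutive (in sorted order) unmarked points — call these \emph{blocks} — whose internal gaps never change; the marked points lie between or among the blocks. Maintain the blocks and the singleton marked nodes in a doubly linked list: deleting a marked node fuses its two neighboring blocks (its two bordering gaps merge into one), and unmarking it turns it into an ordinary point of the fused block. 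This skeleton changes only $O(q)$ times, so we may spend $O(q)$ per update on it.

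Every $k$-point window falls into one of three families: (i) windows contained in a single current block (``pure''); (ii) windows with a current marked point as an endpoint — only $O(q)$ of these, each evaluated in $O(1)$ from prefix sums; and (iii) windows that straddle a marked point with both endpoints unmarked. Family (i) is handled by keeping, per current block, the shortest pure $k$-window via a sliding-window minimum; the subtlety is that a fuse creates up to $k$ new windows across the junction, which we cannot afford to re-scan ($k$ may exceed $q$), so we precompute, in $O(nq)$ time, for each marked point and each offset $t\le k$ the window having that marked point as its $t$-th element, bucketed (within the $O(q^2)$ space budget) so that the minimum relevant to a given deletion or query is extracted in $O(1)$. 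Family (iii) is the crux: once the blocks $B,B'$ immediately left and right of a marked point are fixed and the alignment between them is fixed by how many marked points still lie between them, such a window has length $u_{a+\delta}-v_{c+\delta}$, a difference of two point-arrays read in lockstep, and the best one is a range minimum over $\delta$; we therefore precompute range-minimum structures over these shifted block differences for the relevant alignments. An update re-aligns the $O(q)$ block-pair structures touched by the change and refreshes the affected minima in $O(q)$ time — amortization is available because the number of marked points between a fixed pair of blocks only decreases — and a query scans the $O(q)$ current blocks and marked points and returns the best of (i)--(iii) in $O(q)$ time.

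The main obstacle is family (iii). The tempting shortcut — claiming the shortest $k$-interval must be pure or have a marked endpoint — is false: by squeezing a marked point between two tiny gaps that are in turn flanked by two large gaps, one forces the unique shortest $k$-interval to straddle the marked point with both endpoints deep in block interiors. So (iii) must be represented, and the real work is (a) finding a representation of these straddling windows that fits in $O(q^2)$ space with $O(nq)$ preprocessing, and (b) bounding the cost of maintaining it across the $O(q)$ fuses to $O(q)$ amortized each. A secondary obstacle is that a single deletion can alter up to $k$ of the values $p_{i+k-1}-p_i$ while we have only an $O(q)$-per-update budget; this is why the windows are organized by the block containing their left endpoint together with the current alignment, rather than maintained individually.
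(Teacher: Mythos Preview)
Your sketch correctly isolates the crux --- straddling windows (your family~(iii)) cannot be ignored, and you are right that per-window bookkeeping is too expensive --- but the structures you propose for families~(i) and~(iii) are not specified tightly enough to meet the bounds, and the update argument for~(iii) has a real gap. You organize straddling windows by the pair of blocks $(B,B')$ containing the endpoints together with an ``alignment'' (the number of surviving marked points between them). A single deletion of a marked point $p$ changes the alignment of \emph{every} pair $(B,B')$ with $p$ between them, and there are $\Theta(q^2)$ such pairs. Your amortization (``the count between a fixed pair only decreases'') bounds the total re-alignment work by $\sum_{(B,B')} m_{B,B'}=\Theta(q^3)$ over the $\le q$ updates, i.e.\ $\Theta(q^2)$ amortized per update, not $O(q)$. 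Restricting to pairs ``immediately left and right of a marked point'' does not help either, since a family-(iii) window may straddle several marked points and have its right endpoint in a non-adjacent block. Separately, the ``bucketing'' that compresses the $q\cdot k$ precomputed windows for family~(i) into $O(q^2)$ space is asserted but not described; with $k=\Theta(n)$ this is $\Theta(nq)$ items.

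The paper avoids the pair-indexed organization entirely by indexing by \emph{diagonal}. Write $M_{ij}=p_j-p_i$; the answer is the minimum along the current $k$th diagonal of $M$. Deleting a marked point removes a row and a column, which can shift a surviving entry down by at most one diagonal; hence after at most $q$ deletions every relevant entry lies on one of the original diagonals $k,\dots,k{+}q$. The marked rows/columns chop each of these $q{+}1$ diagonals into $O(q)$ \emph{fragments}; precompute and store only the minimum of each fragment --- that is the entire $O(q^2)$-space structure, built in $O(nq)$ time. The current $k$th diagonal is then a linked list of $O(q)$ fragment summaries drawn from those $q{+}1$ lists. A deletion performs, for each of the $q{+}1$ diagonal lists, one $O(1)$ linked-list surgery (excise the contiguous block bounded by the two singletons of the deleted point and splice it into the neighboring list, merging the now-adjacent fragment summaries), giving $O(q)$ worst-case per update. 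The key difference from your scheme is that indexing by diagonal makes a deletion a \emph{local} cut-and-splice on each of $O(q)$ lists, whereas indexing by block pair forces a global re-alignment across $\Theta(q^2)$ pairs.
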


\begin{proof}
    Sort the points $ P$, and let $ p_1, \ldots,  p_n$ he resulting
    order. Consider the (implicit) matrix $M =  P- P$. Formally, the
    entry $M_{ij}$ is $ p_j -  p_i$ (we are interested only in the
    top right part of this matrix) -- such an entry can be computed in
    $O(1)$ time directly from the sorted point set. The optimal
    quantity of interest is the minimum on the $k$\th diagonal; that
    is, $\alpha(M) = \min_{i} M_{i,i+k-1}$. When a marked point get
    deleted, this corresponds to deleting a row and a column of $M$ --
    the quantity of interest remains the minimum along the $k$\th
    diagonal. Such a deletion, as far as a specific entry of the top
    right of the matrix is concerned, either
    \begin{compactenumi*}
        \item removes it,
        \item keeps it in its place,
        \item shift it one diagonal down as its moves left, or
        \item keep it on the same diagonal as it shifts both up and
        left (see \figref{matrix}).
    \end{compactenumi*}

    In particular, any sequence of at most $q$ deletions of elements
    can shift an entry in the matrix at most $q$ diagonals down. This
    implies that we need to keep track only of the $k,\ldots, k+q$
    diagonals of this matrix. To do better, observe that if we track
    the elements of an original diagonal of interest, the deletions
    can fragment the diagonal into at most $O(q)$ groups, where each
    group still appear as contiguous run of the original diagonal.

    To this end, let a \emphi{fragment} of a diagonal be either
    \begin{compactenumi*}
        \item a singleton entry that appears in a row or column of a
        marked point, or
        \item a maximum contiguous portion of the diagonal which does
        not touch any singleton entries from (i).
    \end{compactenumi*}
    It is easy to verify that the $k$\th diagonal of the matrix at any
    given point in time is made out of a sequence of at most $3k$
    fragments, where each fragment is an original fragment of one of
    the diagonals in the range $k, \ldots, k+q$.

    \begin{figure}[h]
        \phantom{}\hfill%
        \includegraphics[page=1]{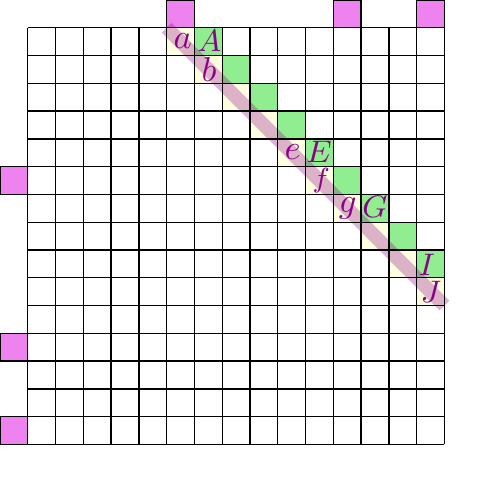}%
        \hfill%
        \includegraphics[page=2]{figs/matrix}%
        \hfill%
        \includegraphics[page=3]{figs/matrix}%
        \phantom{}\hfill\phantom{}%
        \caption{}%
        \figlab{matrix}
    \end{figure}

    As such, instead of storing all the elements of a fragment, we
    only maintain the minimum entry of the fragment (together with the
    information of what pairs of points it corresponds to). After this
    compression, a diagonal of interest can be represented as a linked
    list of $O(q)$ fragment summaries. In the preprocessing stage, the
    algorithm computes this representation for the $k$ to $k+q$
    diagonals (using this representation). This requires $O(q^2)$
    space, and $O(nq)$ time.

    A deletion of a marked point then corresponds to taking a
    contiguous block of linked fragments at the $i$\th list and moving
    it to list $i-1$, doing this surgery for $i=k,\ldots, k+q$. The
    blocks being moved start and end in singleton entries that
    correspond to the deleted point. We also need to remove these two
    singleton elements, and merge the two adjacent fragment summaries
    that are no longer separated by a singleton.  This surgery for all
    the $q+1$ lists of interest can be done in $O(q)$ time (we omit
    the tedious but straightforward details).

    A query corresponds to scanning the $k$\th diagonal and reporting
    the minimum value stored along it. An unmarking operation
    corresponds to merging two fragment summaries and the singleton
    separating them into a single fragment summary, and doing this for
    all the $q+1$ lists. Both operations clearly can be done in $O(q)$
    time.
\end{proof}

\begin{theorem}%
    \thmlab{krect}%
    Given a set $ P$ of $n$ points in the plane and an integer $k$,
    one can compute, in $O(n^2\log n)$ time, the
    smallest-area/perimeter axis-aligned rectangle enclosing $k$
    points.
\end{theorem}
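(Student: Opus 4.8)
The plan is a two-level divide-and-conquer. At the top level, split $P$ by the median $x$-coordinate into the left half $P_L$ and the right half $P_R$ (about $n/2$ points each), and recursively solve the problem on $P_L$ and on $P_R$; this produces the best rectangle whose $x$-extent does not cross the splitting line $\ell$. It remains to handle rectangles whose left edge lies strictly left of $\ell$ and whose right edge lies on or right of $\ell$. We may assume the optimum has a point of $P$ on each side (otherwise shrink it without losing points), so for such a ``crossing'' rectangle the left edge passes through some $a\in P_L$ and the right edge through some $b\in P_R$. Once $a,b$ are fixed, the vertical slab $\sigma(a,b)$ and hence the width $w=b_x-a_x$ are determined, and the remaining task is one-dimensional: among the points of $P$ inside $\sigma(a,b)$, find the shortest $y$-interval containing $k$ of them; this yields the height $h$, and since both the area $wh$ and the perimeter $2w+2h$ are increasing in $h$ for fixed $w$, the same interval is optimal for either objective. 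So the crossing case reduces to: over all pairs $(a,b)\in P_L\times P_R$, evaluate ``shortest $y$-interval with $k$ points inside $\sigma(a,b)$'' and take the best combination with $w(a,b)$.

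To evaluate all these pairs within budget we use \lemref{1d} together with a second divide-and-conquer, now over the pairs. Sort $P_L$ by decreasing $x$ as $a_1,\dots,a_p$ (so $a_1$ is nearest $\ell$) and $P_R$ by increasing $x$ as $b_1,\dots,b_{p'}$; then the slab of $(a_i,b_j)$ contains exactly $\{a_1,\dots,a_i\}\cup\{b_1,\dots,b_j\}$. The recursive routine receives two contiguous index ranges $I=[i_0,i_1]$ and $J=[j_0,j_1]$, together with an instance of the \lemref{1d} structure (points sorted by $y$) whose current point set is $\{a_1,\dots,a_{i_1}\}\cup\{b_1,\dots,b_{j_1}\}$ and whose marked points are exactly $\{a_i:i\in I\}\cup\{b_j:j\in J\}$. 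If $|I|=|J|=1$, query the structure for the shortest $k$-enclosing interval and combine with the corresponding width. Otherwise split the larger of $I,J$ — say $J$ — at its midpoint $j_m$ into $J'=[j_0,j_m]$ and $J''=[j_m+1,j_1]$, and make two copies of the structure. For the branch $J'$ (right edge among the ``near'' $b$'s), the points $b_{j_m+1},\dots,b_{j_1}$ can never lie in a slab of the subproblem, so \emph{delete} those marked points, then recurse on $(I,J')$. For the branch $J''$, the points $b_{j_0},\dots,b_{j_m}$ lie in \emph{every} slab of the subproblem, so \emph{unmark} them (they must not be deleted later), then recurse on $(I,J'')$. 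Splitting $I$ is symmetric. An easy induction shows the invariant is preserved, and the initial call — $I=[1,p]$, $J=[1,p']$, structure built on all of $P$ with every point marked, at preprocessing cost $O(n\log n+n\cdot n)=O(n^2)$ — satisfies it.

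For the running time, observe that at every node the number of marked points is exactly $|I|+|J|$, because each split halves exactly one range while the deletions/unmarkings restore the invariant. Thus a node with $q=|I|+|J|$ does one copy ($O(q^2)$ time and space by \lemref{1d}), plus $O(q)$ deletions/unmarkings at $O(q)$ each, i.e.\ $O(q^2)$ total; and only $O(\log n)$ structures are alive on any root-to-leaf path, so the extra space is $O(n^2)$. Since consecutive splits alternate between $I$ and $J$, after $2t$ levels there are $O(4^t)$ nodes, each with $q=O(n/2^t)$, contributing $O\bigl(4^t\cdot (n/2^t)^2\bigr)=O(n^2)$ per scale; summing over the $O(\log n)$ scales gives $O(n^2\log n)$ for the crossing step (which also dominates its $O(n^2)$ preprocessing). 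Combining with the top-level recursion on $P_L$ and $P_R$, $T(n)=2T(n/2)+O(n^2\log n)=O(n^2\log n)$, as claimed.

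The main obstacle is engineering the nested recursion so that it uses \emph{only} deletions and unmarkings — the sole update operations \lemref{1d} offers — while still visiting all $p\,p'$ pairs with the structure holding precisely the slab contents at each leaf; the ``delete the far side, unmark the near side'' dichotomy is exactly what makes this possible. The secondary subtlety is the bookkeeping: the $O(q^2)$-time copies are what introduce the extra $\log n$ factor, and one must verify — via the ``marked count $=|I|+|J|$'' invariant together with always splitting the larger range — that they telescope to $O(n^2\log n)$ and not more.
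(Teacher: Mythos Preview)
Your proof is correct and essentially matches the paper's approach: both use \lemref{1d} on one coordinate while recursing on slabs in the other, deleting marked points that fall outside the current slab and unmarking those that become permanent residents. The differences---you swap the roles of $x$ and $y$, separate an outer median split from the inner pair recursion (the paper unifies these by starting its single recursion with $\sigma=\tau$), and branch $2$-way instead of $4$-way---are purely organizational and do not change the idea or the bound.
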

\begin{proof}
    We do divide-and-conquer by $y$-coordinates.  Given a set $ P$ of
    $n$ points in the plane, and horizontal slabs $\sigma$ and $\tau$,
    each containing $q$ points of $ P$, we describe a recursive
    algorithm to find a smallest $k$-enclosing axis-aligned
    rectangle containing $ P$, under the restriction that the top
    edge is inside $\sigma$ and the bottom edge is inside $\tau$.  It
    is assumed that either $\sigma$ is completely above $\tau$, or
    $\sigma=\tau$.  It is also assumed that all points above $\sigma$
    or below $\tau$ have already been deleted from $ P$.  There can
    still be a large number of points in $P-(\sigma\cup\tau)$
    (recursion will lower $q$ but not necessarily $n$).  We will not
    explicitly store the points in $P-(\sigma\cup\tau)$, but rather
    ``summarize'' the points in an $O(q^2)$-space structure.  Namely,
    we assume that the $x$-coordinates of $ P$ are maintained in the
    1D data structure ${\cal S}$ of \lemref{1d}, where the marked
    points are the $O(q)$ points in $P\cap(\sigma\cup\tau)$.

    The algorithm proceeds as follows:
    \begin{compactenume}[start=0]
        \smallskip
        \item If $q=1$, then report the answer by querying ${\cal S}$
        in $O(1)$ time.  Else:

        \smallskip
        \item Divide $\sigma$ into two horizontal subslabs $\sigma_1$
        and $\sigma_2$, each containing $q/2$ points of $ P$.
        Likewise divide $\tau$ into $\tau_1$ and $\tau_2$.

        \smallskip
        \item For each $i,j\in\{1,2\}$, recursively solve the problem
        for the slabs $\sigma_i$ and $\tau_j$;\footnote{If
           $\sigma=\tau$, one of the four recursive calls is
           unnecessary.} to prepare for the recursive call, make a
        copy of ${\cal S}$, delete the (marked) points in
        $P\cap(\sigma\cup\tau)$ above $\sigma_i$ or below $\tau_j$,
        and unmark the remaining points in
        $P\cap(\sigma\cup\tau)-(\sigma_i\cup\tau_j)$, as shown in
        \figref{recursion}.  The time needed for these $O(q)$
        deletions and unmarkings, and for copying ${\cal S}$, is
        $O(q^2)$ (we emphasize that this bound is independent of $n$).
    \end{compactenume}

    \begin{figure}[h]
        \noindent%
        \hfill \includegraphics[page=1]{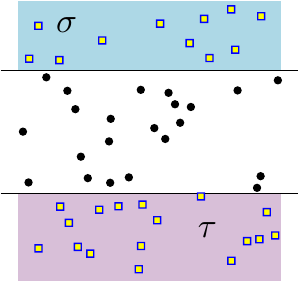} \hfill
        \includegraphics[page=2]{figs/div_dav} \hfill\phantom{}%
        
        \smallskip%

        \includegraphics[page=3]{figs/div_dav} \hfill%
        \includegraphics[page=4]{figs/div_dav} \hfill%
        \includegraphics[page=5]{figs/div_dav} \hfill%
        \includegraphics[page=6]{figs/div_dav}
        \caption{}
        \figlab{recursion}
    \end{figure}
    
    The running time satisfies the recurrence
    \begin{equation*}
        T(n,q) = 4\,T(n,q/2) + O(q^2),
    \end{equation*}
    with $T(n,1)=O(1)$, which gives $T(n,q)=O(q^2\log q)$.  Initially,
    $\sigma=\tau$ is the entire plane, with $q=n$; the data structure
    ${\cal S}$ can be preprocessed in $O(n^2)$ time.  Thus, the total
    running time is $O(n^2\log n)$.
\end{proof}

One can readily get an algorithm with $k$-sensitive running time for
the perimeter case, by reducing the problem into $O(n/k)$ instance
of size $O(k)$. This reduction is well known \cite{ee-innfm-94,Datta}
in this case -- approximate the smallest enclosing disk containing $k$
points in $O(n\log n)$ time, partition the plane into a grid with side
length proportional to the radius of this disk, and then solve the
problem for each cluster (i.e., $3\times 3$ group of grid cells) that
contains at least $k$ points of $ P$, using our above algorithm. We
thus get the following.

\begin{corollary}%
    \thmlab{krect:sensitive}%
    Given a set $ P$ of $n$ points in the plane and an integer $k$,
    one can compute, in $O(n\log n + nk \log k)$ time, the
    smallest-perimeter axis-aligned rectangle enclosing $k$ points
    of $ P$.
\end{corollary}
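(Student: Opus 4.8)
The plan is to reduce the problem, in the standard way \cite{ee-innfm-94,Datta}, to $O(n/k)$ independent sub-instances of $O(k)$ points each and then invoke \thmref{krect}. Write $R^*$ for the minimum-perimeter axis-aligned rectangle enclosing $k$ points of $P$, and let $r^*$ be the radius of the smallest disk enclosing $k$ points of $P$. First I would compute, by a standard $O(n\log n)$-time algorithm, a number $r$ with $r^* \le r \le \alpha r^*$ for an absolute constant $\alpha \ge 1$.

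The geometric heart of the reduction is a two-sided size estimate for $R^*$. On one side, the axis-parallel bounding square of the smallest $k$-enclosing disk is itself a rectangle enclosing $k$ points, of perimeter $8r^*$, so $\mathrm{perimeter}(R^*)\le 8r^*$; writing $R^*$ as a $w\times h$ rectangle, this gives $\mathrm{diam}(R^*)=\sqrt{w^2+h^2}\le w+h\le 4r^*$. Now lay down the uniform grid of cell side $s := r/(2\alpha)$. Since $s\le r^*/2$, every grid cell lies inside a disk of radius $s/\sqrt2<r^*$ and hence contains at most $k-1$ points of $P$; in particular every block of $m\times m$ cells contains $O(m^2k)=O(k)$ points for constant $m$. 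On the other side, $s\ge r^*/(2\alpha)$, so $\mathrm{diam}(R^*)\le 4r^*\le 8\alpha s$, and therefore $R^*$ is covered by a block of $D\times D$ cells, where $D:=\lceil 8\alpha\rceil+1$.

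For each cell $c$, let $\mathrm{Blk}(c)$ denote the $(2D-1)\times(2D-1)$ block of cells centred at $c$, and call $c$ \emph{active} if $\mathrm{Blk}(c)$ contains at least $k$ points of $P$. Since a point of $P$ belongs to $\mathrm{Blk}(c)$ for only $(2D-1)^2=O(1)$ cells $c$, a double-counting argument shows there are only $O(n/k)$ active cells, while the previous paragraph gives $|\mathrm{Blk}(c)\cap P|=O(k)$ for every $c$. The algorithm computes $r$, buckets the points of $P$ into the grid, identifies the active cells, runs the algorithm of \thmref{krect} on $\mathrm{Blk}(c)\cap P$ for every active cell $c$, and outputs the minimum-perimeter rectangle found over all these runs. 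For correctness: the $k$ points enclosed by $R^*$ all lie in the $D\times D$ block that covers $R^*$; picking any cell $c^*$ of that block we get $R^*\subseteq\mathrm{Blk}(c^*)$ and $\mathrm{Blk}(c^*)$ contains those $k$ points, so $c^*$ is active and the run on $\mathrm{Blk}(c^*)\cap P$ returns a rectangle of perimeter at most $\mathrm{perimeter}(R^*)$; conversely every rectangle returned by any run encloses $k$ points of $P$, so nothing beats $R^*$. For the running time, computing $r$ and bucketing cost $O(n\log n)$, and the $O(n/k)$ runs of \thmref{krect} cost $O((n/k)\cdot k^2\log k)=O(nk\log k)$ in total, for an overall $O(n\log n+nk\log k)$ as claimed.

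The only genuinely external ingredient — and thus the step I would be most careful about — is the constant-factor approximation of $r^*$ in $O(n\log n)$ time; everything else is bookkeeping with absolute constants (one should, of course, also dispatch the degenerate cases $k\le 1$, $k>n$, and $r^*=0$ directly). It is worth noting why the same reduction fails for minimum \emph{area}: the bounding square of the smallest $k$-enclosing disk controls $R^*$'s perimeter, but a minimum-area $k$-enclosing rectangle can be an arbitrarily long, thin box of area far below $(r^*)^2$, so it need not be confined to any $O(r^*)$-neighbourhood — which is precisely why \secref{reduction:k:sensitive} requires a different tool.
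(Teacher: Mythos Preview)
Your proof is correct and follows essentially the same approach as the paper: approximate the smallest $k$-enclosing disk in $O(n\log n)$ time, lay down a grid with cell side proportional to its radius, and run \thmref{krect} on each $O(1)\times O(1)$ cluster of cells containing at least $k$ points. You have been more explicit about the constants and the double-counting bound on the number of active clusters, and your closing remark on why the reduction fails for area matches the paper's own discussion.
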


The $O(n\log n)$ term can be eliminated in the word RAM model, using a
randomized linear-time algorithm for approximate smallest
$k$-enclosing disk \cite{hm-facsk-05} (which requires integer division
and hashing).

A similar reduction for the minimum-area case is more challenging, and
was left as an open problem in previous work \cite{krs-faprf-17}. The
difficulty arises because the optimal-area rectangle may be long and
thin, with side length potentially much bigger than the radius of the
minimum $k$-enclosing disk.  Nonetheless, we show that such a
reduction is possible (with an extra logarithmic factor) in the next
subsection.

\subsection{$k$-sensitive running time for smallest area}
\seclab{reduction:k:sensitive}

Our starting point is a shallow cutting lemma for 3-sided
ranges~\cite{jl-rsmt-11}, which we describe in detail in
\apndref{shallow:cutting} for the sake of completeness.  (It can be
viewed as an orthogonal variant of \Matousek{}'s shallow cutting lemma
for halfspaces~\cite{Mat92}.)

\SaveContent{\LemmaShallowCutting}{%
   Given a set $ P$ of $n$ points in the plane, lying above a
   horizontal line $\Line$, and a parameter $k$, one can compute a
   family $\Family$ of at most $ 2\ceil{n/k}$ subsets of $ P$, each
   of size at most $6k$. The collection of sets can be computed in
   $O(n)$ time if the $x$-coordinates have been pre-sorted. For any
   axis-aligned rectangle $\rect$ with its bottom edge lying on
   $\Line$, that contains less than $k$ points of $ P$, we have
   $ P \cap \rect \subseteq  Q$ for some $ Q \in \Family$.  }

\begin{lemma}[\cite{jl-rsmt-11}]
    \lemlab{decomp:k}%
    \LemmaShallowCutting{}
\end{lemma}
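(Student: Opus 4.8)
\medskip
\noindent\textbf{Proof plan.}\quad
The plan is to build $\Family$ directly (no random sampling) by a single upward sweep of the rectangle's top edge. Everything rests on a reformulation of what a $3$-sided query sees: a rectangle with bottom edge on $\Line$ has the form $\rect=[x_1,x_2]\times[\Line,y]$, and writing $P_{\le y}=\{p\in P:p_y\le y\}$ for the points of $P$ not above height $y$, we have that $P\cap\rect$ is exactly the set of points of $P_{\le y}$ whose $x$-coordinate lies in $[x_1,x_2]$; that is, $P\cap\rect$ is a contiguous run --- a \emph{window} --- of $P_{\le y}$ in $x$-sorted order. Consequently, if $\rect$ contains fewer than $k$ points then $P\cap\rect$ is an $x$-window of $P_{\le y}$ of length $<k$. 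So it suffices to produce, for every height $y$, a few $O(k)$-sized subsets covering all length-$(<k)$ $x$-windows of $P_{\le y}$ --- while keeping the \emph{total} number of subsets (over all $y$) down to $O(n/k)$.

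For a single set $P_{\le y}$ this is easy: cut its $x$-sorted order into consecutive \emph{blocks} of size in $[k,2k)$. Since each block has $\ge k$ points, no window of fewer than $k$ points can contain a whole block, so every such window is covered by the union of $\le 2$ consecutive blocks; taking all such unions $B\cup B'$ gives a local cover of sets of size $<4k\le 6k$. To avoid paying for a separate partition at each height, I would instead sweep a horizontal line upward from $\Line$, insert the points of $P$ one at a time in increasing $y$-order, and maintain a block partition $\Pi$ of the set $S$ of points already swept: whenever an insertion pushes a block to $2k$ points, split it into two blocks of $\approx k$ points. Let $\Family$ consist of all sets $B\cup B'$ for consecutive blocks $B,B'$ of $\Pi$ (plus the individual blocks, to handle the transient phase where $|S|<2k$) that ever occur during the sweep.

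Correctness is then immediate: for a shallow $\rect=[x_1,x_2]\times[\Line,y]$, inspect $\Pi$ at the instant the sweep reaches height $y$ (after that insertion and any induced split); $P\cap\rect$ is an $x$-window of $S=P_{\le y}$ of length $<k$, hence lies in $B\cup B'$ for two consecutive blocks of that $\Pi$, and $B\cup B'\in\Family$ has size $<6k$. For the size of $\Family$, note the block structure changes only on an insertion (which leaves all blocks and their left-to-right order intact) or on a split (one block becomes two); there are no merges or deletions, so the number of blocks ever created is proportional to the number of splits, which is at most the final number of blocks, namely $O(n/k)$, and each split creates only $O(1)$ new adjacent pairs. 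Hence $|\Family|=O(n/k)$; choosing the split threshold and target block sizes carefully pins the constants to the stated $2\ceil{n/k}$ and $6k$. Running time: the sweep performs $n$ insertions and $O(n/k)$ splits, each touching $O(k)$ points, so once the point set is sorted the construction takes $O(n)$ time.

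The step I expect to be the real obstacle is precisely keeping the total number of cells at $O(n/k)$ instead of $O((n/k)^2)$. The obvious implementation --- fix a grid of $\ceil{n/k}$ ``$y$-levels'', and for each level independently build its own size-$2k$ block partition --- already produces about $j$ blocks at the $j$-th level and hence $\Theta((n/k)^2)$ cells overall; the merge-free, split-only incremental partition is exactly the device that forces blocks (and adjacent pairs) to be \emph{shared} across the whole continuum of heights and collapses the count. The remaining work is routine but delicate: checking that the $[k,2k)$ block-size invariant survives each insertion-then-split step, extracting the exact constants $6k$ and $2\ceil{n/k}$ from the right choice of thresholds (e.g.\ via two suitably shifted partitions so that each window lies in a single block), and performing each split together with the recording of newly created adjacent pairs in $O(k)$ time so the overall bound is $O(n)$.
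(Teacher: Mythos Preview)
Your approach is correct and is essentially the paper's own proof: an upward sweep that maintains a split-only block partition of $P_{\le y}$ (splitting a block when it reaches $2k$ points), recording block neighbourhoods so that every length-$(<k)$ $x$-window is covered; the paper records the triple $I^-\cup I\cup I^+$ at the moment $I$ is split (hence the $6k$), while you record adjacent pairs, but this is cosmetic. One small point to tighten: as written, ``all sets $B\cup B'$ that ever occur'' is ambiguous --- insertions alone change the contents of a pair, so taken literally this is $\Theta(n)$ distinct sets; what you want (and what your counting argument implicitly uses) is one set per adjacent-pair \emph{identity}, snapshotted at its maximal extent (just before one of the two blocks splits, or at sweep's end), after which the containment $P\cap\rect\subseteq B\cup B'\subseteq Q\in\Family$ goes through. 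Also note that achieving $O(n)$ with only $x$-coordinates pre-sorted (not $y$) needs a little more than ``sweep in $y$-order''; the paper handles this via an $O(k)$-time bottom-$O(k)$-in-a-slab query structure rather than an explicit $y$-sorted sweep.
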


\begin{defn}
    Let $\RectAll$ be the set of all axis-aligned rectangles in the
    plane. A \emphi{scoring function} is a function
    $f: \RectAll \rightarrow \Re$, with the following properties:
    
    \begin{compactenumA}
        \item Translation invariant: $\forall  p \in \Re^2$, we have
        $f( p + \rect) = f(\rect)$.

        \smallskip%
        \item Monotonicity: $\forall \rect, \rect' \in \RectAll$, such
        that $\rect \subseteq \rect'$, we have that
        $f(\rect)\leq f(\rect')$.
    \end{compactenumA}
\end{defn}

Functions that satisfy the above definition include area, perimeter,
diameter, and enclosing radius of a rectangle.

For a set $U \subseteq \Re^d$, let
$\refX{U} = \Set{(x,-y)}{(x,y) \in U}$ be the \emphi{reflection} of
$U$ through the $x$-axis.  Similarly, let
$\foldX{U} = \Set{\bigl.(x,|y|)}{(x,y) \in U}$ be the \emphi{folding}
of $U$ through the $x$-axis.

\begin{lemma}
    \lemlab{family:horizontal}%
    Given a set $ P$ of $n$ points in the plane, a parameter $k$, and
    a scoring function $f$, let $\rmin$ be the minimum score
    axis-aligned rectangle that contains $k$ points of $ P$, and
    intersects the $x$-axis.  Then, one can compute a family $\Family$
    of $O( n/k )$ subsets of $ P$, such that (i) each set of
    $\Family$ is of size $\leq 12k$, and (ii)
    $\rmin \cap  P \subseteq  Q$, for some $ Q \in \Family$.
\end{lemma}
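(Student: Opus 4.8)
The plan is to reduce the two-sided problem (rectangles crossing the $x$-axis) to two instances of the one-sided problem (rectangles with bottom edge on a horizontal line), and then apply \lemref{decomp:k} to each. First I would fold the point set: let $P' = \foldX{P}$ be the reflection of all points with negative $y$-coordinate up to the upper half-plane, so $P'$ lies (weakly) above the $x$-axis $\Line$. The key observation is that if $\rect$ is any axis-aligned rectangle that intersects the $x$-axis, then after folding, $\rect \cap P$ gets mapped into $\rect' \cap P'$, where $\rect'$ is the rectangle obtained by reflecting the part of $\rect$ below $\Line$ — concretely, $\rect'$ is the axis-aligned rectangle with the same $x$-extent as $\rect$, bottom edge on $\Line$, and height equal to $\max$ of the two half-heights of $\rect$. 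By translation invariance and monotonicity of $f$, we have $f(\rect') \le f(\rect)$ (the $x$-extent is unchanged and the height only shrinks, after translating the bottom edge to $\Line$), and $\cardin{\rect' \cap P'} \ge \cardin{\rect \cap P}$. In particular $\rmin$ maps to some rectangle $\rmin'$ with bottom edge on $\Line$ containing $\ge k$ points of $P'$ and with $f(\rmin') \le f(\rmin)$.

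Here there is a subtlety: $\rmin'$ might contain \emph{strictly more} than $k$ points of $P'$, whereas \lemref{decomp:k} is phrased for rectangles containing \emph{fewer than} $k$ points. To handle this I would argue as follows. Either $\cardin{\rmin \cap P} = k$ exactly and no folding-collision happens on its boundary, in which case $\cardin{\rmin' \cap P'} $ could still exceed $k$ only through points whose reflection coincides — but even so, $\rmin'$ contains a smallest sub-rectangle (with bottom edge still on $\Line$, obtained by shrinking the top edge and the two vertical edges inward) containing exactly $k$ points of $P'$, and that sub-rectangle is contained in $\rmin'$, hence scores no more by monotonicity. Call it $\rect^\star$. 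Now I want to capture $\rect^\star \cap P'$ by \lemref{decomp:k}. To do that cleanly, apply \lemref{decomp:k} to $P'$ with parameter $k+1$: any rectangle with bottom edge on $\Line$ containing $< k+1$, i.e. $\le k$, points of $P'$ is contained in some member of the resulting family $\Family'$, which has at most $2\ceil{n/(k+1)} = O(n/k)$ sets, each of size at most $6(k+1) \le 12k$ (for $k \ge 1$).

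Finally I unfold. Each set $Q' \in \Family'$ is a subset of $P'$; replace it by the set $Q \subseteq P$ consisting of all points of $P$ whose folded image lies in $Q'$. Since folding is at most $2$-to-$1$ onto points sharing the same $x$-coordinate and opposite signs of $y$, we have $\cardin{Q} \le 2\cardin{Q'} \le 12(k+1)$; to keep the bound at $12k$ one can instead run \lemref{decomp:k} with parameter $k$ (not $k+1$) and absorb the off-by-one by noting that $\rect^\star$ can be taken to contain exactly $k$ points so that a marginal perturbation of its top edge makes it contain $< k$ points while still capturing the same $k$-point subset — I would work out which of these two bookkeeping routes is cleaner in the write-up. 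The family $\Family = \Set{Q}{Q' \in \Family'}$ then has $O(n/k)$ sets, each of size $\le 12k$, and by construction $\rmin \cap P \subseteq \rect^\star \cap P \subseteq Q$ for the appropriate $Q$. The running time is dominated by the $O(n)$ cost of \lemref{decomp:k} plus the $O(n)$ folding and unfolding (assuming $x$-coordinates are presorted, which costs $O(n \log n)$ once). The main obstacle is the off-by-one / strict-inequality mismatch between "contains $k$ points" in the statement and "contains fewer than $k$ points" in \lemref{decomp:k}, together with making the unfolding size bound come out to exactly $12k$; everything else is a routine application of translation invariance and monotonicity of $f$.
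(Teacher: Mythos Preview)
There is a real gap in the shrinking step, and it is not the off-by-one issue you flag. After folding, $\rmin' = \foldX{\rmin}$ contains the $k$ specific points $\foldX{\rmin \cap P}$, but it may also contain \emph{additional} points of $P'$, namely the folds of points of $P$ lying in $\refX{\rmin'}\setminus\rmin$. When you shrink $\rmin'$ to a $3$-sided rectangle $\rect^\star$ with exactly $k$ points of $P'$, nothing forces $\rect^\star$ to retain the particular $k$ points $\foldX{\rmin\cap P}$; worse, there may be \emph{no} such $\rect^\star$ at all. For a concrete instance take perimeter as the score, $k=2$, and $P=\{(0,1),(1,1),(\tfrac12,-\tfrac{9}{10})\}$. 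Then $\rmin=[0,1]\times[0,1]$ with $\rmin\cap P=\{(0,1),(1,1)\}$, but after folding $\rmin'\cap P'=\{(0,1),(1,1),(\tfrac12,\tfrac{9}{10})\}$ has three points, and any $3$-sided rectangle with bottom edge on the $x$-axis that contains both $(0,1)$ and $(1,1)$ must contain all of $[0,1]\times[0,1]$ and hence also $(\tfrac12,\tfrac{9}{10})$. So no $\rect^\star$ with at most $k$ points of $P'$ covers $\foldX{\rmin\cap P}$, and your chain $\rmin\cap P\subseteq \rect^\star\cap P\subseteq Q$ breaks at the first inclusion.

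The paper avoids shrinking entirely. It applies \lemref{decomp:k} to $P'$ with parameter $2k$, so the sets in $\Family'$ have size at most $6\cdot 2k=12k$; this is where the constant $12$ actually comes from, not from a $2$-to-$1$ unfolding (in general position the map $P\to P'$ is a bijection, so unfolding preserves sizes). The key idea you are missing is a direct bound $|\rmin'\cap P'|\le 2k$: writing $\rect=\rmin'$ and $\rect'=\refX{\rect}$, translation invariance and monotonicity give $f(\rect)=f(\rect')\le f(\rmin)$, and both $\rect$ and $\rect'$ intersect the $x$-axis, so if either contained more than $k$ points of $P$ one could shrink it to a $k$-point rectangle of score at most $f(\rmin)$, contradicting the choice of $\rmin$. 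Hence $|\rect\cap P|,\,|\rect'\cap P|\le k$, giving $|\rmin'\cap P'|=|(\rect\cup\rect')\cap P|\le 2k$. Now $\rmin'$ itself is captured by some $Q'\in\Family'$, and since $\rmin\subseteq\rect\cup\rect'$, unfolding yields $\rmin\cap P\subseteq Q$.
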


\begin{proof}
    Let $ P' = \foldX{ P}$ be the ``folding'' of $ P$ over the
    $x$-axis, and let $\Family'$ be the cover of $ P'$ by sets of
    size $\leq 12k$, as computed by \lemref{decomp:k} for rectangles
    containing at most $2k$ points.  Let $\Family$ be the
    corresponding family of sets for $ P$. We claim that $\Family$
    has the desired property.

    \phantom{}\hfill%
    \includegraphics[page=1]{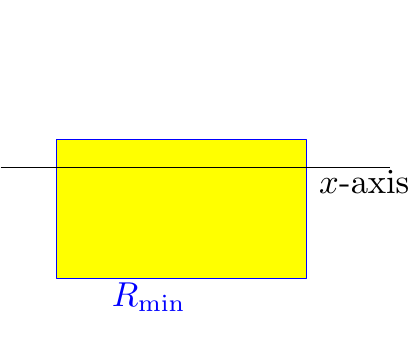}\hfill
    \includegraphics[page=2]{figs/reflect}\hfill
    \includegraphics[page=3]{figs/reflect}%
    \hfill%
    \phantom{}%
    
    Let $\rect = \foldX{\rmin}$ be the folding of $\rmin$, and let
    $\rect' =\refX{\rect}$. Observe that
    $f(\rect) = f(\rect') \leq f(\rmin)$ because of the translation
    invariance of $f$, and monotonicity of $f$.

    If $ |\rect \cap  P| > k$ then one can shrink it so that it
    contains only $k$ points of $ P$, but this would imply that
    $\rmin$ is not the minimum, a contradiction. The case that
    $ |\rect' \cap  P| > k$ leads to a similar contradiction. We
    conclude that $\rect \cup \rect'$ contains at most $2k$ points of
    $ P$. Implying that $\rect = \foldX{\rmin}$ contains at most $2k$
    points of $ P'$.  As such, there is a set $ Q' \in \Family'$
    that contains $\rect \cap  P'$. Now, let $ Q$ be the
    corresponding set in $\Family$ to $ Q'$. Since
    $\rmin \subseteq \rect \cup \rect'$, it follows that
    $\rmin \cap  P \subseteq (\rect \cup \rect') \cap  P \subseteq
     Q$, as desired.
\end{proof}

\begin{lemma}
    \lemlab{family}%
    Given a set $ P$ of $n$ points in the plane, a parameter $k$, and
    a scoring function $f$, let $\rmin$ be the minimum score rectangle
    that contains $k$ points of $ P$. One can compute, in
    $O(n \log n )$ time, a family $\Family$ of
    $O( \tfrac{n}{k} \log \tfrac{n}{k} )$ subsets of $ P$, such that
    (i) each subset of $\Family$ is of size $\leq 12k$, and (ii)
    $\rmin \cap  P \subseteq  Q$, for some $ Q \in \Family$.
\end{lemma}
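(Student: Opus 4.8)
The plan is to peel the two-sided (in $y$) problem down to $O(\log\tfrac{n}{k})$ applications of the one-sided reduction \lemref{family:horizontal}, via divide-and-conquer on the $y$-coordinates. Maintain $ P$ in an array sorted by $x$ (one-time $O(n\log n)$ cost). Define a recursive routine $\mathsf{Cover}(\Spl)$ on a subset $\Spl\subseteq  P$: if $|\Spl|\le 12k$, return the single set $\{\Spl\}$; otherwise let $\Line$ be the horizontal line through the $y$-median of $\Spl$, splitting $\Spl$ into $\Spl_{\mathrm{bot}}$ and $\Spl_{\mathrm{top}}$, each of size $\le|\Spl|/2$. Apply \lemref{family:horizontal} to $\Spl$, $k$, and $f$, with the role of the $x$-axis played by $\Line$ (legitimate since $f$ is translation invariant), obtaining a family $\Family_\Spl$ of $O(|\Spl|/k)$ subsets of $\Spl$, each of size $\le 12k$. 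Return $\Family_\Spl\cup\mathsf{Cover}(\Spl_{\mathrm{bot}})\cup\mathsf{Cover}(\Spl_{\mathrm{top}})$. The output is $\Family:=\mathsf{Cover}( P)$.

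For correctness, the key is a localization invariant: whenever $\mathsf{Cover}$ is invoked on a set $\Spl$ with $\rmin\cap  P\subseteq\Spl$, the rectangle $\rmin$ is a minimum-$f$ rectangle among \emph{all} axis-aligned rectangles containing $k$ points of $\Spl$. Indeed, $\rmin$ contains exactly $k$ points of $\Spl$, and any rectangle $R$ containing $k$ points of $\Spl\subseteq  P$ contains $\ge k$ points of $ P$; shrinking $R$ to a sub-rectangle $R'$ holding exactly $k$ points of $ P$ gives $f(R)\ge f(R')\ge f(\rmin)$ by monotonicity and the global optimality of $\rmin$. Now follow $\rmin$ down the recursion from $\Spl= P$. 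At a node $\Spl$ with $\rmin\cap  P\subseteq\Spl$ and $|\Spl|>12k$: if $\rmin$ meets $\Line$, then by the invariant $\rmin$ is also a minimum-$f$ rectangle among the $k$-point rectangles of $\Spl$ meeting $\Line$, so \lemref{family:horizontal} (with $\Line$ as the $x$-axis) yields $\rmin\cap P=\rmin\cap\Spl\subseteq  Q$ for some $ Q\in\Family_\Spl$, and we are done; otherwise $\rmin$ lies strictly on one side of $\Line$, hence $\rmin\cap P\subseteq\Spl_{\mathrm{bot}}$ or $\subseteq\Spl_{\mathrm{top}}$, and the invariant is maintained as we recurse. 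If instead we reach a base case $\Spl$ (so $|\Spl|\le 12k$ and $\rmin\cap P\subseteq\Spl$), then $ Q=\Spl\in\Family$ works. In all cases property (ii) holds, and (i) is immediate since every set added has size $\le 12k$.

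It remains to bound $|\Family|$ and the running time. The recursion tree has depth $O(\log\tfrac{n}{k})$, and at any fixed depth the sets $\Spl$ are pairwise disjoint subsets of $ P$, so their total size is $\le n$ and $\sum_{\Spl}|\Family_\Spl|=O(n/k)$ over that depth; summing over depths gives $O(\tfrac{n}{k}\log\tfrac{n}{k})$ sets from the $\Family_\Spl$'s, plus $O(n/k)$ singleton base-case sets (the base-case subsets are disjoint, and away from the trivial case $n=O(k)$ each has more than $6k$ points). Hence $|\Family|=O(\tfrac{n}{k}\log\tfrac{n}{k})$. For the time bound, represent each $\Spl$ by its $x$-sorted array; computing the $y$-median and performing the split is a linear-time selection followed by a stable partition, costing $O(|\Spl|)$ and preserving the $x$-order of the two halves, so each call to \lemref{family:horizontal} (which runs in linear time on $x$-sorted input, being built on \lemref{decomp:k}) also costs $O(|\Spl|)$. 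Thus the work is $O(n)$ per level, i.e. $O(n\log\tfrac{n}{k})=O(n\log n)$, plus the initial $O(n\log n)$ sort.

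The one delicate point is the localization invariant: \lemref{family:horizontal} only guarantees coverage of \emph{the} minimizer of the sub-instance handed to it, so one must verify that the same $\rmin$ really is that minimizer at every level — which is exactly what the shrink-to-exactly-$k$-points argument secures, using the global optimality of $\rmin$ over all $k$-enclosing rectangles of $ P$. The remaining items (stable partitioning to keep the $x$-sorted order so that the $\log\tfrac{n}{k}$ invocations cost only $O(n\log n)$ rather than $O(n\log^2 n)$, and the off-by-one constants in set sizes inherited from \lemref{decomp:k}) are routine.
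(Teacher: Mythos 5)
Your proposal is correct and takes essentially the same approach as the paper: divide-and-conquer on the $y$-median, apply \lemref{family:horizontal} at the splitting line, recurse on both halves, and bottom out at $\leq 12k$ points, with pre-sorting by $x$ done once. The paper dismisses correctness as ``by now standard''; your localization invariant (that $\rmin$ remains the minimizer of every sub-instance it survives into, via the shrink-to-exactly-$k$ argument) is precisely the detail being elided, and you fill it in correctly.
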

\begin{proof}
    Find a horizontal line $\Line$ that splits $ P$ evenly, and
    compute the family of \lemref{family:horizontal}. Now recurse on
    the points above $\Line$ and the points below $\Line$. The
    recursion bottoms out when the number of points is $\leq 12k$. The
    correctness is by now standard -- as soon as a recursive call
    picks a line that stabs the optimal rectangle, the family
    generated for this line contains the desired set.  The
    $x$-coordinates need to be pre-sorted just once at the beginning.
\end{proof}

\begin{theorem}
    \thmlab{k:s:line}%
    Let $ P$ be a set of $n$ points in the plane, $k$ be a
    parameter, $f$ be a scoring function for rectangles, and let \Alg
    be an algorithm the computes, in $\Talg(m)$ time, the axis-aligned
    rectangle containing $k$ points in a set of $m$ points that
    minimizes $f$. Then one can compute the rectangle containing $k$
    points of $ P$ that minimizes $f$, in time
    \begin{math}
        O\bigl(n\log n + \Talg(12k) \tfrac{n}{k} \log \tfrac{n}{k}
        \bigr).
    \end{math}
\end{theorem}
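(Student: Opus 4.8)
The theorem is essentially an immediate consequence of \lemref{family}, so the proof is short: combine the family decomposition with brute-force application of \Alg on each member, and verify correctness and running time. Let me lay out the steps.

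First, I would apply \lemref{family} to $P$, $k$, and $f$: in $O(n\log n)$ time this produces a family $\Family$ of $O(\tfrac{n}{k}\log\tfrac{n}{k})$ subsets of $P$, each of size at most $12k$, with the guarantee that the optimal rectangle $\rmin$ satisfies $\rmin\cap P\subseteq Q$ for some $Q\in\Family$. Second, for each $Q\in\Family$, I would run \Alg on $Q$ to compute the rectangle containing $k$ points of $Q$ that minimizes $f$; this costs $\Talg(12k)$ per set (using monotonicity of $\Talg$, or just noting $|Q|\le 12k$). Third, I would output the best (minimum-$f$) rectangle found over all these calls.

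For correctness: the rectangle returned is certainly feasible (it contains $k$ points of some $Q\subseteq P$, hence $k$ points of $P$), so its score is at least $f(\rmin)$. Conversely, there is a $Q^\*\in\Family$ with $\rmin\cap P\subseteq Q^\*$; since $\rmin$ contains exactly $k$ points of $P$, all of which lie in $Q^\*$, the rectangle $\rmin$ is a candidate in the call $\Alg(Q^\*)$ — that call returns a rectangle of score at most $f(\rmin)$. Hence the overall minimum equals $f(\rmin)$, and the associated rectangle is optimal. (A small point worth stating explicitly: $\rmin$ contains exactly $k$ points of $P$ — if it contained more we could not have $\rmin\cap P\subseteq Q^\*$ forcing a contradiction, but in any case the definition takes $\rmin$ to contain $k$ points, and for the minimum-score rectangle containing $k$ points one may assume it contains exactly $k$.)

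For the running time: the decomposition costs $O(n\log n)$; the $O(\tfrac{n}{k}\log\tfrac{n}{k})$ calls to \Alg each cost at most $\Talg(12k)$, for a total of $O\bigl(\Talg(12k)\,\tfrac{n}{k}\log\tfrac{n}{k}\bigr)$; collecting and comparing the results is dominated by these terms. Summing gives the claimed bound $O\bigl(n\log n + \Talg(12k)\,\tfrac{n}{k}\log\tfrac{n}{k}\bigr)$. There is no real obstacle here — the entire difficulty was already packed into \lemref{family:horizontal} and \lemref{family} (the shallow-cutting argument and the folding trick that handles long, thin rectangles); the present theorem is just the bookkeeping that assembles them into an algorithm, so the only thing to be careful about is stating the feasibility/optimality argument cleanly and noting that the $x$-coordinates are pre-sorted once so the $O(n\log n)$ term is not incurred repeatedly inside the recursion of \lemref{family}.
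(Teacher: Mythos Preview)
Your proposal is correct and follows exactly the paper's own approach: compute the family $\Family$ via \lemref{family} and apply \Alg to each of its $O(\tfrac{n}{k}\log\tfrac{n}{k})$ subsets of size $\le 12k$. Your write-up is in fact more detailed than the paper's two-line proof, and the correctness and running-time arguments you spell out are sound.
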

\begin{proof}
    Compute the family of sets $\Family$ using \lemref{family}, and
    then apply $\Alg$ to each set in this family.
\end{proof}

Combining \thmref{krect} with \thmref{k:s:line} gives the following.
\begin{corollary}%
    \corlab{krect:sensitive:area}%
    Given a set $ P$ of $n$ points in the plane and an integer $k$,
    one can compute, in $O(nk \log \tfrac{n}{k} \log k )$ time, the
    smallest-area axis-aligned rectangle enclosing $k$ points of
    $ P$.
\end{corollary}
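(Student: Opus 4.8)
The plan is to obtain the corollary as a one-line instantiation of \thmref{k:s:line}, using the algorithm of \thmref{krect} as the black box \Alg and taking the scoring function $f$ to be the area of a rectangle. First I would check that the hypotheses of \thmref{k:s:line} are met. Area is translation invariant and monotone under containment, so it is a valid scoring function (as the paper already observes right after introducing that notion); and \thmref{krect} supplies, for any $m$-point set and any integer $k$, an algorithm computing the minimum-area axis-aligned rectangle enclosing $k$ of the points in $\Talg(m)=O(m^2\log m)$ time. Nothing about correctness needs to be reproved here: \thmref{k:s:line} already packages the fact that \lemref{family} reduces the problem to the $O(\tfrac nk\log\tfrac nk)$ subinstances it produces.

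Next I would substitute these quantities into the time bound of \thmref{k:s:line}. Since each set in the family has size at most $12k$, we have $\Talg(12k)=O((12k)^2\log(12k))=O(k^2\log k)$, so the running time becomes
\[
   O\!\left(n\log n + \Talg(12k)\cdot\tfrac{n}{k}\log\tfrac{n}{k}\right)
   = O\!\left(n\log n + k^2(\log k)\cdot\tfrac{n}{k}\log\tfrac{n}{k}\right)
   = O\!\left(n\log n + nk\log k\log\tfrac{n}{k}\right).
\]

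The only remaining (routine) step is to absorb the additive $O(n\log n)$ term. For $k$ in the interesting range, say $2\le k\le n/2$, one has $k\log k\log\tfrac nk=\Omega(\log n)$, so $n\log n=O(nk\log k\log\tfrac nk)$ and the bound collapses to the stated $O(nk\log\tfrac nk\log k)$; the boundary regimes are trivial, since $k=O(1)$ already gives $O(n\log n)$ from the reduction and $k=\Theta(n)$ is handled directly by \thmref{krect}, whose $O(n^2\log n)$ bound is itself $O(nk\log\tfrac nk\log k)$ when $k=\Theta(n)$.

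I do not expect a genuine obstacle in this corollary — all the substance lives in \lemref{family} (the folding argument plus the 3-sided shallow cutting of \lemref{decomp:k}) and in the divide-and-conquer of \thmref{krect}. The two points to be slightly careful about are verifying that area truly satisfies both the translation-invariance and the monotonicity-under-inclusion requirements of a scoring function, and the elementary arithmetic of composing the two time bounds together with the boundary-case bookkeeping above.
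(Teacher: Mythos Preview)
Your proposal is correct and matches the paper's own argument, which is literally the one-line ``Combining \thmref{krect} with \thmref{k:s:line} gives the following.'' You supply exactly that instantiation, together with the routine arithmetic and boundary-case bookkeeping that the paper leaves implicit.
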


For the case when $ t=n-k$ is very small (i.e., finding the smallest
enclosing axis-aligned rectangle with $ t$ \emph{outliers}), there is
an easy reduction (for both perimeter and area) yielding
$O(n+ \Talg(4 t))$ time~\cite{sk-ekpsa-98,Ata}, by keeping the $ t$
leftmost/rightmost/topmost/bottommost points.  Immediately from
\thmref{krect}, we get $O(n + t^2\log t)$ running time.

\subsection{An approximation algorithm for smallest area}
\seclab{min:area:approx}%

In this subsection, we give an efficient approximation algorithm for
the smallest-area $k$-enclosing rectangle problem.  The smallest
perimeter case is straightforward to approximate, by grid rounding,
but the area case is tougher, again because the optimal rectangle may
be long and thin.

\begin{figure}[h]
    \centerline{\includegraphics[scale=0.999]{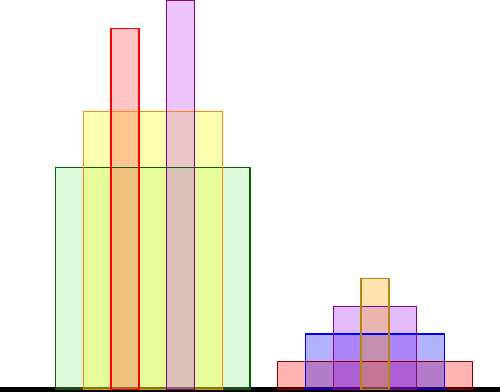}}
\end{figure}
\begin{defn}
    A \emphi{laminar} family of 3-sided rectangles is a collection
    $\RR$ of axis-aligned rectangles with the bottom edges lying on
    the $x$-axis, such that for every pair of rectangles
    $[a,b]\times [0,c]$ and $[a',b']\times [0,c']$ in $\RR$, one of
    the following is true:
    \begin{compactitem}
        \smallskip%
        \item $[a,b]\cap [a',b']=\emptyset$, or %
        \smallskip%
        \item $[a,b]\subseteq [a',b']$ and $c > c'$, or \smallskip%
        \item $[a',b']\subseteq [a,b]$ and $c' > c$.
    \end{compactitem}
\end{defn}

Standard range trees can answer orthogonal range counting queries
(counting the number of points inside rectangular ranges) in
logarithmic time per query (this has been improved to
$O(\sqrt{\log n})$ in the offline setting by Chan and
\Puatracscu~\cite{ChaPat}).  The following lemma shows how to achieve
constant time per query in the offline laminar special case, which
will be useful later in our approximation algorithm.

\begin{lemma}\lemlab{laminar}
    Let $ P$ be a set of $n$ points, and let $\RR$ be a laminar
    family of $O(n)$ 3-sided rectangles in the plane.  Suppose that we
    are given a \emphi{designated point} on the top edge of each
    rectangle in $\RR$, and the $x$- and $y$-coordinates of all the
    designated points and all the points of $ P$ have been
    pre-sorted.  Then we can count, for each rectangle $R\in\RR$, the
    number of points of $ P$ inside the rectangle, in $O(n)$ total
    time.
\end{lemma}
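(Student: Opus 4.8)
The plan is to exploit the laminar structure to process the rectangles in a single sweep, maintaining a stack of ``active'' rectangles, so that each point of $P$ contributes $O(1)$ amortized work. First I would sort the $x$-coordinates once (this is given as pre-sorted). I sweep a vertical line from left to right over the union of: the left edges of the rectangles in $\RR$, the right edges of the rectangles in $\RR$, and the points of $P$. By the laminar property, at any $x$-position the rectangles whose $x$-intervals contain that position are nested; so I maintain them on a stack ordered by containment, with the innermost (smallest $x$-interval, hence largest height $c$ by the definition of laminar) on top. When the sweep line crosses the left edge of $[a,b]\times[0,c]$, I push it; when it crosses the corresponding right edge, I pop it (and all rectangles above it on the stack have already been popped at their own right edges, since they are nested strictly inside). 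The key point is that the stack is genuinely a stack: because of laminarity, intervals are either disjoint or nested, so left/right endpoints are properly bracketed.

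The second ingredient handles the $y$-coordinate: for a point $p=(x_p,y_p)$ of $P$ encountered during the sweep, the rectangles currently containing $x_p$ are exactly those on the stack, and among them $p$ lies inside $R=[a,b]\times[0,c]$ iff $y_p\le c$. Since the stack is sorted by height ($c$ increasing from bottom to top), the rectangles containing $p$ form a \emph{prefix} of the stack (from the bottom). So I want to add $1$ to the counters of a bottom-prefix of the stack. I do this lazily with a partial-sum/``difference'' trick on the stack: rather than storing each rectangle's running count, I store at each stack position an increment, and the true count of a rectangle is the sum of increments from the bottom of the stack up to and including its position. To add $1$ to a bottom-prefix ending at the rectangle of largest height $\le y_p$ on the stack, I add $1$ at that position and subtract $1$ at the next position up (if any) — an $O(1)$ update, provided I can locate that split point in $O(1)$. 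I locate it by walking up the stack from where I last stopped, charging the walk to the rectangles I pass; but to keep this $O(1)$ amortized I instead note that, because both the $y$-coordinates of $P$ and the designated points (equivalently the heights $c$, via the designated points on the top edges) are pre-sorted, I can thread a second pointer that moves monotonically — this is the one delicate bookkeeping step. When a rectangle is popped off the stack at its right edge, I first fold its increment into the increment of the element below it (so the invariant ``count $=$ prefix sum of increments'' is preserved for the rest), and I read off and record its final count at that moment.

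The main obstacle is making the ``add $1$ to a bottom-prefix of the stack'' operation truly $O(1)$ amortized rather than $O(\log n)$: naively one binary-searches the stack for the height threshold $y_p$. I expect to resolve this by observing that the heights on the stack are not in arbitrary order relative to the query sequence — since all relevant $y$-values are pre-sorted, and pushes/pops themselves occur in sorted-$x$ order, a careful amortization (a pointer that only advances, reset appropriately on pushes) gives linear total time; this is analogous to standard monotone-stack arguments (as in the Barbay \etal\ style algorithm referenced in \secref{krect}). The remaining steps — maintaining the difference array on the stack, merging increments on pop, and reporting counts — are each clearly $O(1)$ per event, and there are $O(n)$ events, giving the claimed $O(n)$ total time. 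I would also remark that correctness hinges only on the laminar property being used exactly twice: once to guarantee the containing rectangles form a stack in $x$, and once (via the height ordering forced by laminarity) to guarantee they form a prefix in $y$.
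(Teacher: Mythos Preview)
Your vertical-sweep-with-stack approach is genuinely different from the paper's, and the difference is exactly where the gap lies. First a minor slip: with the innermost rectangle on top and heights increasing up the stack, a point $p$ lies in those rectangles with $c\ge y_p$, which is a \emph{top suffix} of the stack, not a bottom prefix. The difference-on-the-stack trick survives this reversal, so this alone is harmless.

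The real problem is the step you yourself flag as ``delicate'': locating the split position in the stack in $O(1)$ amortized. Your sweep is by $x$, so the points of $P$ arrive with $y$-coordinates in arbitrary order; meanwhile the stack changes, and although each push adds a new maximum height, the sequence of pushed heights is \emph{not} globally monotone (after popping a rectangle of height $5$ you may next push one of height $3$). Thus at each point you face a successor query in a dynamically changing subset of the height-ranks. Pre-sorting gives you each value's rank in the global merged list, but the stack at any moment holds an arbitrary subset of those ranks, so ``smallest stack element $\ge y_p$'' is still dynamic predecessor search. A single monotone pointer cannot handle this: consider two disjoint children of a common outer rectangle, with interleaved points whose $y$-values alternate between the two children's height ranges. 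Making your scheme linear would require something like offline weighted-ancestor queries on the laminar tree --- which is doable, but only via a union-find argument you have not supplied. As written, the algorithm is $O(n\log n)$.

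The paper avoids this entirely by sweeping a horizontal line \emph{downward in $y$}. When the sweep reaches the top edge of $R=[a,b]\times[0,c]$, the points with $y_p\le c$ are exactly those still below the line, so the problem collapses to one dimension in $x$. Laminarity guarantees that every previously encountered top edge has its $x$-interval contained in $[a,b]$, so one simply unions the corresponding $x$-groups (and their running counts) into one. The designated point on $R$'s top edge is what makes this efficient: a single \textsf{find} on its $x$-coordinate locates the starting group, and a linked-list walk finds the rest, with total walk length bounded by the total number of unions. Gabow--Tarjan linear-time union-find (applicable because the union tree is the $x$-sorted path) then gives the claimed $O(n)$.
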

\begin{proof}
    We describe a sweep algorithm, with a horizontal sweep line $\ell$
    moving downward.  Let $X$ denote the set of $x$-coordinates of all
    points of $P$ and all designated points of the rectangles of
    $\RR$.  Consider the union of $R\cap\ell$ over all $R\in\RR$; it
    can be expressed as a union of disjoint intervals.  Let $\II_\ell$
    be the $x$-projection of these disjoint intervals.  Store the
    following collection $\Gamma_\ell$ of disjoint sets in a
    \emph{union-find\/} data structure~\cite{Tar}: for each interval
    $I\in\II_\ell$, define the set $X\cap I$, and for each $a\in X$
    not covered by $\II_\ell$, define the singleton set $\{a\}$.
    Create a linked list $L_\ell$ containing these sets in
    $\Gamma_\ell$ ordered by $x$.  For each set in $\Gamma_\ell$, we
    store a count of the number of points of $P$ below $\ell$ with
    $x$-coordinates inside the set.

    \begin{figure}[h]
        \includegraphics[page=1]{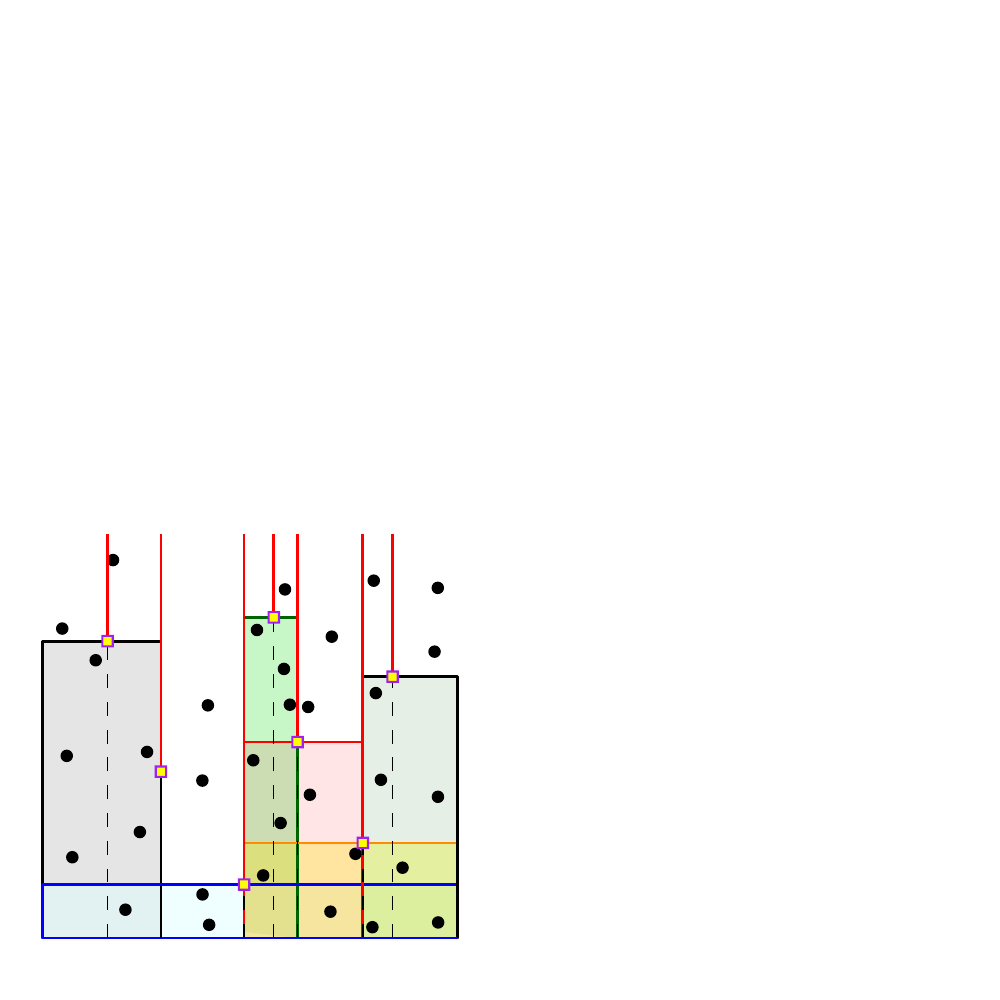}%
        \hfill%
        \includegraphics[page=2]{figs/sweep_down}%
        \hfill%
        \includegraphics[page=3]{figs/sweep_down}%
    \end{figure}
    
    Suppose that the sweep line $\ell$ hits the top edge of a
    rectangle $R$ with $x$-projection $[a,b]$.  By definition of a
    laminar family, any interval in $\II_\ell$ that intersects $[a,b]$
    must be contained in $[a,b]$.  We find the set in $\Gamma_\ell$
    that contains the $x$-coordinate of the designated point of $R$.
    From this set, we walk through the list $L_\ell$ in both
    directions to find all sets contained in $[a,b]$, and replace
    these sets with their union in $\Gamma_\ell$ and $L_\ell$.  The
    count for the new set is the sum of the counts of the old sets;
    this also gives the output count for the rectangle $R$.

    Next, suppose that the sweep line $\ell$ hits a point $p\in P$.
    We find the set in $\Gamma_\ell$ that contains the $x$-coordinate
    of $p$, and decrement its count.  (For example, if the set is a
    singleton, its count changes from 1 to 0.)

    The entire sweep performs $O(n)$ union and find operations.  Gabow
    and Tarjan~\cite{GabTar} gave a linear-time union-find algorithm
    for the special case where the ``union tree'' is known in advance;
    their algorithm is applicable, since the union tree here is just a
    path of the elements ordered by $x$.
\end{proof}

We first solve the approximate decision problem:

\begin{lemma}
    Given a set $ P$ of $n$ points in the plane, a value $\areaVal$,
    and parameters $k$ and $\eps \in (0,1)$, one can either compute
    a $k$-enclosing axis-aligned rectangle $\rect'$ such that
    $\areaX{\rect'} \leq \bigl(1+O(\eps)\bigr) \areaVal$, or conclude
    that the smallest-area $k$-enclosing axis-aligned rectangle has
    area greater than $\areaVal$.  The running time of the algorithm
    is $O\bigl(\eps^{-3}\log \eps^{-1}\cdot n\log n\bigr)$.
\end{lemma}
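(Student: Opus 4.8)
The plan is to reduce the approximate decision problem to a collection of laminar range-counting instances that can all be resolved via \lemref{laminar}. First, I would dispose of the case where the optimal rectangle is "fat": if the smallest-area $k$-enclosing rectangle has both side lengths within a factor of, say, $\eps^{-1}$ of each other, then it is contained in a square of area $O(\eps^{-1}\areaVal)$, and a standard grid argument (place a grid of cell-size $\Theta(\sqrt{\eps\areaVal})$, examine each $O(\eps^{-1})\times O(\eps^{-1})$ cluster, and snap the rectangle to grid-aligned coordinates within the cluster, losing only a $(1+O(\eps))$ factor) finds it — or certifies area $>\areaVal$ — in $O(\eps^{-2} n\log n)$ time. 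So the interesting case is the "thin" one: the optimal rectangle has width $w$ and height $h$ with $wh\le\areaVal$ and one dimension much larger than the other.

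For the thin case, observe there are only $O(\log\eps^{-1})$ "aspect-ratio scales" to consider: guess $h\in[2^{-i}\sqrt{\areaVal},2^{-i+1}\sqrt{\areaVal}]$ for $i=0,1,\ldots,O(\log\eps^{-1})$, and correspondingly $w\le 2^{i+1}\sqrt{\areaVal}$ (larger $i$, i.e. thinner-than-$\eps$, is handled by the outlier/strip trick or folds into the fat case by symmetry after swapping axes — I would treat both orientations). Within a fixed scale, round the horizontal strip structure: the optimal rectangle's top and bottom edges lie in a horizontal strip of height $2h$, so partition the plane into $O(n/?)$ — better, since we only need a $(1+\eps)$-approximation of the height, it suffices to consider $O(\eps^{-1})$ candidate heights per "anchor" $y$-coordinate of $P$, snapped to a geometric-then-arithmetic progression. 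Now fix the vertical extent $[y_1,y_2]$ of the candidate rectangle (one of $O(\eps^{-1} n)$ choices after the snapping, but these can be organized so that as $y_1,y_2$ sweep, the induced family of $x$-intervals we must query forms a laminar family). For a fixed vertical slab, the points of $P$ inside it project to $x$-coordinates, and we want the shortest $x$-interval containing $k$ of them — but we only need this up to a $(1+\eps)$ factor in length, so it suffices to test $O(\eps^{-1}\cdot(n/k)\cdot\text{something})$ — actually, using the shallow-cutting family of \lemref{decomp:k} restricted to the slab, the number of distinct "rounded" rectangles to count is $O(\eps^{-1}n)$ overall, and crucially they can be arranged into a laminar family (3-sided: share a common bottom line after a shear/translation per slab). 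Feeding this laminar family to \lemref{laminar} counts all of them in $O(n)$ time, after the one-time $O(n\log n)$ sort.

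Putting the bookkeeping together: $O(\log\eps^{-1})$ aspect scales, and within each, $O(\eps^{-2})$ rounding granularity contributes the remaining factors, for $O\bigl(\eps^{-3}\log\eps^{-1}\cdot n\log n\bigr)$ total — the $\log n$ coming from the initial sort (and from $O(n/k)$ shallow-cutting instances if $k$ is small, which collapses since the sets have size $O(k)$). The main obstacle, and the part requiring genuine care, is the \emph{second} bullet: verifying that the family of 3-sided rectangles one actually needs to count — after snapping both the height and the left/right endpoints to the appropriate progressions, over all vertical slabs in a scale — really is laminar (or is a union of $O(1)$ laminar families), so that \lemref{laminar} applies. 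This is where the geometry of the rounding must be chosen deliberately: the snapping of $x$-endpoints must be consistent across nested slabs, and the "designated point" hypothesis of \lemref{laminar} must be arranged. I expect the rest (the grid argument for the fat case, the error accounting showing each rounding step costs only $1+O(\eps)$, and the running-time recurrence) to be routine.
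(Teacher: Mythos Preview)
Your proposal has a genuine gap in the thin case. You claim there are only $O(\log\eps^{-1})$ aspect-ratio scales to consider, but the aspect ratio of the optimal rectangle is not bounded in terms of $\eps$: a $k$-enclosing rectangle of area $\le\areaVal$ can have width $w$ anywhere from $\Theta(\sqrt{\areaVal})$ up to the diameter of $P$ (with $h=\areaVal/w$), so the number of dyadic width scales is $\Theta\bigl(\log(\diamX{P}/\sqrt{\areaVal})\bigr)$, which can be $\Theta(\log n)$ or, with real coordinates, unbounded. The escape hatches you offer do not work: a rectangle with $h/w<\eps$ does not become fat after swapping axes---it becomes thin in the other orientation---and no ``outlier/strip trick'' collapses the unbounded range of scales. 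At best this route costs an extra $\log n$ factor over the stated bound; as written, the argument is incomplete. A secondary issue is that you invoke the shallow-cutting family of \lemref{decomp:k} inside each slab, but that lemma is about exact containment of $\le k$ points and plays no role here; it is a red herring for the approximate decision problem.

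The paper avoids the scale issue entirely, and this is the idea you are missing. It does divide-and-conquer on a horizontal line $\ell$; for rectangles crossing $\ell$ with top edge through a point $p=(p_x,p_y)\in P$, the height is in $[p_y,2p_y]$ and the width is at most $\areaVal/p_y$. The $x$-extent $[p_x-\areaVal/p_y,\,p_x+\areaVal/p_y]$ is then enclosed in a \emph{dyadic} (quadtree) interval of length $\Theta(\areaVal/p_y)$, using a standard two-shift trick, and that interval is subdivided into an $\eps$-grid; the bottom edge is snapped to one of $O(\eps^{-1})$ multiples of $\eps p_y$. This produces $O(\eps^{-3})$ canonical rectangles per $p$, and the laminarity you correctly flag as the crux falls out automatically: for fixed grid offsets $j,j',j''$, fixed shift, and fixed residue of the dyadic level modulo $\log_2\eps^{-1}$, as $p_y$ increases the dyadic level can only increase, so the widths can only shrink and the $x$-intervals nest. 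Hence the canonical rectangles decompose into $O(\eps^{-3}\log\eps^{-1})$ laminar families, each counted in $O(n)$ time by \lemref{laminar}. No enumeration of aspect scales is needed because the dyadic level is determined by $p_y$ itself.
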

\begin{proof}
    It is sufficient to solve the problem for the case where the
    rectangle must intersect a horizontal line $\Line$ in
    $O((1/\eps)^3\log(1/\eps)\cdot n)$ time, assuming that the $x$-
    and $y$-coordinates of the given points $ P$ have been
    pre-sorted.  Then standard divide-and-conquer by $y$-coordinates
    gives an $O((1/\eps)^3\log(1/\eps)\cdot n\log n)$-time algorithm
    for the general problem.  Pre-sorting needs to be done only once
    at the beginning.
    
    Without loss of generality, assume that $\Line$ is the $x$-axis.
    Suppose there exists a rectangle $\rect^*$ intersecting $\Line$
    that contains at least $k$ points and has area at most $\areaVal$.
    By symmetry, we may assume that $\rect^*$ has greater area above
    $\Line$ than below, and that the top edge passes through some
    input point $ p=( p_x, p_y)\in  P$.  Then the height $h^*$ of
    $\rect^*$ is between $ p_y$ and $2 p_y$, and the width $w^*$ is
    at most $\areaVal/ p_y$.
    
    Without loss of generality, assume that all $x$-coordinates are in
    $[0,1/3]$.  Define a one-dimensional \emphi{quadtree interval\/}
    (also known as a \emph{dyadic interval\/}) to be an interval of
    the form $[\tfrac{m}{2^i},\tfrac{m+1}{2^i}]$.  It is known that
    every interval of length $w<1/3$ is contained in a quadtree
    interval of length $O(w)$ after shifting the interval by one of
    two possible values $s\in\{0,1/3\}$ (this is a special case of a
    shifting lemma for $d$-dimensional quadtrees~\cite{Bern,Chan98}).
    Thus, suppose that $[p_x-\areaVal/p_y, p_x+\areaVal/p_y]$ is
    contained in the interval $[\tfrac{m}{2^i}+s,\tfrac{m+1}{2^i}+s]$,
    where the length $\tfrac{1}{2^i}$ is equal to the smallest power
    of 2 greater than $c \areaVal/p_y$, for some constant $c$.  (Note
    that $i$ is a nondecreasing function of $p_y$.)  Without loss of
    generality, assume that $1/\eps = 2^E$ for an integer $E$.  Define
    a family of $O(1/\eps^3)$ \emphi{canonical} rectangles of the form
    \begin{equation*}
        [\tfrac{m}{2^i} + \tfrac{j}{2^{i+E}} + s,\ 
        \tfrac{m}{2^i} + \tfrac{j'}{2^{i+E}} + s]
        \ \times\
        [-j''\eps p_y, p_y]        
    \end{equation*}
    over all possible indices $j,j',j''\in \{0,\ldots,1/\eps\}$ such
    that
    $p_x\in [\tfrac{m}{2^i} + \tfrac{j}{2^{i+E}} + s,\ \tfrac{m}{2^i}
    + \tfrac{j'}{2^{i+E}} + s]$.
    
    By rounding, $R^*$ is contained in a canonical rectangle $R'$ with
    height at most $h^*+O(\eps)p_y\le (1+O(\eps))h^*$ and width at
    most
    \begin{equation*}
        w^*+O(\eps)\areaVal/p_y \leq \bigl(1+O(\eps)\bigr)\areaVal/h^*,
    \end{equation*}
    and thus area at most $(1+O(\eps))\areaVal$.  So, it suffices to
    count the number of points inside each canonical rectangle and
    return the smallest area among those rectangles containing at
    least $k$ points.
    
    To speed up range counting, observe that for canonical rectangles
    with the same $j,j',j''$, the same $s\in\{0,1/3\}$, and the same
    value for $(i\bmod E)$, the portion of the rectangles above
    (resp.\ below) $\Line$ forms a laminar family.  This is because:
    (i)~in the $x$-projections, if a pair of intervals intersects, one
    interval must be contained in the other; (ii)~as the height of the
    3-sided rectangle increases, $p_y$ increases, and so $i$ can only
    increase (or stay the same), and so the width of the rectangle can
    only decrease (or stay the same).  Thus, we can apply
    \lemref{laminar} to compute the counts of the points inside each
    rectangle, for all canonical rectangles with a fixed $j,j',j'',s$
    and $(i\bmod E)$, in $O(n)$ time (for each canonical rectangle, we
    can use a point $(p_x,p_y)\in P$ on the top edge, and a
    corresponding point $(p_x,-j''\eps p_y)$ on the bottom edge, as
    the designated points).  The number of choices for $j,j',j'',s$
    and $(i\bmod E)$ is $O((1/\eps)^3\log(1/\eps))$.
\end{proof}

\begin{theorem}
    \thmlab{smallest:area:approx}%
    Given a set $ P$ of $n$ points in the plane, and parameters $k$
    and $\eps \in (0,1)$, one can compute a $k$-enclosing rectangle
    $\rect'$ such that $\areaX{\rect'} \leq (1+\eps) \optY{ P}{k}$,
    where $\optY{ P}{k}$ is the area of the smallest axis-aligned
    rectangle containing $k$ points of $ P$.  The expected running
    time of the algorithm is
    $O\bigl( (1/\eps)^3\log(1/\eps)\cdot n \log n \bigr)$.
\end{theorem}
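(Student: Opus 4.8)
The plan is to convert the decision procedure of the lemma above into the claimed optimization algorithm via Chan's randomized optimization technique~\cite{Chan99}: that technique turns a decision algorithm running in $D(n)$ time, with $D(n)/n$ nondecreasing, into an algorithm that outputs the optimum in $O(D(n))$ expected time, provided an $n$-point instance can be split, in $O(n)$ time, into $O(1)$ subinstances each of size $\le\lceil n/2\rceil$ whose optima have minimum equal to the optimum of the original instance. Here the decision algorithm is the one just proved, with $D(n)=O(\eps^{-3}\log\eps^{-1}\,n\log n)$ and $D(n)/n=O(\eps^{-3}\log\eps^{-1}\log n)$ nondecreasing, so the whole task reduces to supplying the decomposition (and rescaling $\eps$ by a constant at the end).

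For the decomposition, take the horizontal line $\ell$ through the median $y$-coordinate and let $P_+,P_-$ be the points above and below it, each of size $\le\lceil n/2\rceil$. The optimal rectangle lies strictly above $\ell$, strictly below $\ell$, or is stabbed by $\ell$. A $k$-enclosing rectangle of $P_+$ contains $\ge k$ points of $P$, and shrinking it to exactly $k$ points of $P$ does not increase its area, so $\optY{P}{k}\le\optY{P_+}{k}$, and likewise $\optY{P}{k}\le\optY{P_-}{k}$; in the stabbed case the optimum is witnessed directly. Hence $\optY{P}{k}=\min\bigl(\optY{P_+}{k},\,\optY{P_-}{k},\,\beta\bigr)$, where $\beta$ is the minimum area of a $k$-enclosing rectangle stabbed by $\ell$. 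The first two terms are half-size instances of the same problem. For $\beta$, the canonical-rectangle routine inside the proof of the lemma is already a \emph{linear}-time decision procedure, so we recurse on it too, splitting now by the vertical median line; the ``stabbed by both lines'' case is the subproblem ``every candidate rectangle must contain a fixed point'', and iterating this produces the subproblem ``every candidate rectangle must contain a fixed axis-parallel box'', again equipped with a linear-time canonical-rectangle decision routine and decomposed by further median splits (chosen to avoid the current box so that progress is always made). Each split halves the point set, so this mutual recursion has depth $O(\log n)$ and, after Chan's random-order pruning, expected branching below $2$; at a node of size $m$ the work is $O(m)$ for splitting, $O(\eps^{-3}\log\eps^{-1}\,m)$ for the canonical-rectangle routines, and $O(\eps^{-3}\log\eps^{-1}\,m\log m)$ for the $O(1)$ pruning calls to the decision procedure, so the standard geometric-series bound gives $O(\eps^{-3}\log\eps^{-1}\,n\log n)$ expected time (the $x$- and $y$-coordinates are presorted once, in $O(n\log n)$ time).

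Two points need care. First, the approximation factor does not compound over the $O(\log n)$ levels: the optimal rectangle is ``owned'' by exactly one node of the recursion --- the first one whose point set contains it but where it is stabbed by that node's splitting line --- and there the canonical-rectangle routine, run with a $(1+O(\eps))$-accurate target value, returns a rectangle of area $\le(1+O(\eps))\optY{P}{k}$; at every other node the shrinking argument shows that every rectangle found has area $\ge\optY{P}{k}$, so taking the best rectangle over all nodes loses only one $(1+O(\eps))$ factor, and rescaling $\eps$ by a constant gives $(1+\eps)$. Second, because the decision procedure is itself approximate (it may return a rectangle of area up to $(1+O(\eps))\areaVal$), the pruning step is invoked with the scaled threshold $t/(1+O(\eps))$ rather than the current best area $t$; a negative answer then certifies that $t$ is already within $(1+O(\eps))$ of that subinstance's optimum, so it may be skipped. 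The main obstacle is getting this recursion to genuinely terminate --- chiefly, formalizing the ``must contain a fixed box'' subproblem together with its linear-time canonical-rectangle decision routine, and picking median splits that keep shrinking the point set while making progress on the box --- after which the running-time and correctness bookkeeping is a routine instantiation of Chan's technique.
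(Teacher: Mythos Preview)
You correctly identify Chan's randomized technique as the right tool, and your handling of the approximate decision procedure (scaling the threshold by $1+O(\eps)$ before pruning) is sound. However, your decomposition has a genuine gap. You yourself state that Chan's framework requires subinstances of size $\le\lceil n/2\rceil$, yet your ``stabbed by $\ell$'' subproblem $\beta$ lives on all $n$ points. Your fix---treating $\beta$ as a new problem type with a linear-time decider and recursing through ``contains a point'' and then ``contains a box''---does not close the hole: along the branch $A(n)\to B(n)\to C(n)\to D(n,B)\to D(n,B')\to\cdots$ (where the box keeps growing), the point set never shrinks, so your assertion that ``each split halves the point set'' is simply false on this branch. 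With expected branching $H_2=1.5$ (or $H_3\approx 1.83$) and a linear-time decider invoked on $n$ points at every node of a chain of length $\Theta(\log n)$, the cost along that branch alone is already $n\cdot 1.5^{\Theta(\log n)}=n^{1+\Omega(1)}$, not $O(n\log n)$. The sketch of ``median splits chosen to avoid the current box'' does halve the count of edge candidates on one side, but it does not halve the number of points fed to the decision procedure, which is what the recurrence needs.

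The paper's decomposition sidesteps the issue entirely. Divide the plane into $b$ columns and $b$ rows (for a large constant $b$), and for each of the $O(b^4)$ quadruples $(c,c',r,r')$ consider the subproblem where the four edges of the rectangle lie in the designated slabs. The key trick is that this subproblem depends only on the $\le 4n/b$ points in $c\cup c'\cup r\cup r'$, once one adds $O(n/b)$ copies of the four ``inner corner'' points and adjusts $k$ accordingly; thus every subinstance genuinely has $O(n/b)$ points. Processing the quadruples in random order and calling the approximate decider twice per quadruple gives the recurrence $T(n)\le(4\ln b+1)\,T(20n/b)+O\bigl(\eps^{-3}\log\eps^{-1}\cdot n\log n\bigr)$, which solves to the stated bound for, e.g., $b=1000$.
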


\begin{proof}
    We can use known techniques for reducing optimization problems to
    decision problems.  We give a self-contained description of one
    approach based on Chan's randomized technique~\cite{Chan99}.

    Let $b$ be a sufficiently large constant.  Divide the plane into
    $b$ columns (vertical slabs) each containing $n/b$ points.
    Similarly divide the plane into $b$ rows (horizontal slabs) each
    containing $n/b$ points.  These steps take linear time by invoking
    a selection algorithm $O(b)$ times.  For each quadruple
    $\QQ=(c,c',r,r')$ where $c$ and $c'$ are columns (with $c$ left of
    $c'$ or $c=c'$) and $r$ and $r'$ are rows (with $r$ below $r'$ or
    $r=r'$), consider the subproblem of finding the smallest-area
    rectangle containing $k$ points of $ P$, subject to the extra
    constraints that the left edge of the rectangle lies in $c$, the
    right edge lies in $c'$, the bottom edge lies in $r$, and the top
    edge lies in $r'$.  To solve this subproblem, it suffices to
    consider the at most $4n/b$ points in
    $ P\cap (c\cup c'\cup r\cup r')$.  To ensure that the extra
    constraints are satisfied, we add $4n/b$ copies of the four
    intersection points formed by the right boundary of $c$, the left
    boundary of $c'$, the top boundary of $r$, and the bottom boundary
    of $r'$; and we add $16n/b$ to $k$. (Straightforward modifications
    can be made in the special case when $c=c'$ or $r=r'$.)  Let
    $ P_\QQ$ be the resulting point set of size at most $20n/b$
    points, and $k_\QQ$ be the resulting value of $k$.  We thus have
    $$\optY{ P}{k} = \min_\QQ \optY{ P_\QQ}{k_\QQ}.$$

    To compute an approximation to the minimum, we consider the at
    most $b^4$ quadruples in \emph{random order} $\QQ_1,\QQ_2,\ldots$
    and keep track of an approximate minimum $\areaVal$ with the
    invariant that
    $\areaVal\le \min\{ \optY{ P_{\QQ_1}}{k_{\QQ_1}},\ldots,
    \optY{ P_{\QQ_{i-1}}}{k_{\QQ_{i-1}}}\}< (1+\eps)\areaVal$ after
    the $(i-1)$\th iteration.  Let $\eps'$ be such that
    $(1+\eps')^2=1+\eps$; note that $\eps'=\Theta(\eps)$.  At the
    $i$\th iteration, we run the approximate decision procedure for
    $ P_{\QQ_i}$ twice, at values $\areaVal$ and
    $\areaVal/(1+\eps')$, which allows us to conclude one of the
    following:
    \begin{compactitem}
        \medskip%
        \item $\optY{ P_{\QQ_i}}{k_{\QQ_i}} \ge \areaVal$.  In this
        case, we can continue to the next iteration and the invariant
        is maintained.

        \smallskip%
        \item
        $\areaVal/(1+\eps')\le \optY{ P_{\QQ_i}}{k_{\QQ_i}} <
        (1+\eps')\areaVal$.  In this case, we reset $\areaVal$ to
        $\areaVal/(1+\eps')$ and the invariant is maintained.

        \smallskip%
        \item $\optY{ P_{\QQ_i}}{k_{\QQ_i}} < \areaVal$.  In this
        case, we recursively compute an approximation $\areaVal_i$ to
        $\optY{ P_{\QQ_i}}{k_{\QQ_i}}$, satisfying
        $\areaVal_i\le \optY{ P_{\QQ_i}}{k_{\QQ_i}}<
        (1+\eps)\areaVal_i$.  We reset $\areaVal$ to $\areaVal_i$ and
        the invariant is maintained.
    \end{compactitem}
    \medskip%
    We land in the third case only if $\optY{ P_{\QQ_i}}{k_{\QQ_i}}$
    is the smallest among the $i$ values
    $\optY{ P_{\QQ_1}}{k_{\QQ_1}},\ldots,$
    $\optY{ P_{\QQ_i}}{k_{\QQ_i}}$, which happens with probability at
    most $1/i$.  Thus, the expected number of recursive calls is
    bounded by the $(b^4)$\th Harmonic number
    $\sum_{i=1}^{b^4}1/i < \ln(b^4) + 1$.  The expected running time
    satisfies the recurrence
$$T(n) \:\le\: (4\ln b + 1)\,T(20n/b) + O\bigl( (1/\eps)^3\log(1/\eps)\cdot n \log n \bigr),$$
which gives $T(n)=O\bigl( (1/\eps)^3\log(1/\eps)\cdot n \log n \bigr)$
when $b=1000$, for example.
\end{proof}

\section{Extensions}

\subsection{3-sided smallest $k$-enclosing rectangle}
\seclab{3:sided}

In this subsection, we give a slightly faster algorithm for the
3-sided variant of the problem, finding the smallest-area/perimeter
rectangle enclosing $k$ points, under the restriction that the
bottom edge lies on the $x$-axis.  The improvement uses the latest
result on the (min,+)-convolution problem, and is interesting in view
of a reduction in \secref{lower:bounds} in the reverse direction,
establishing essentially an equivalence of the 3-sided problem to
(min,+)-convolution.
    
\begin{problem}\problab{convol}
    \textsf{(min,+)-Convolution.}  Given real numbers
    $a_0,\ldots,a_{n-1},b_0,\ldots,b_{n-1}$, compute
    $c_\ell = \min_{i=0}^\ell (a_i+b_{\ell-i})$ for all
    $\ell=0,\ldots,2n-2$.
\end{problem}
  
Let $\Tconvol(n)$ be the time complexity of the (min,+)-convolution
problem.  As observed by Bremner \etal~\cite{Bremner}, the problem can
be reduced to (min,+)-matrix multiplication, and using the current
best result by Williams~\cite{w-fapsp-14} (derandomized by Chan and
Williams~\cite{cw-daovm-16}),
$\Tconvol(n)=O(n^2/2^{\Omega(\sqrt{\log n})})$.  We use
(min,+)-convolution to speed up the preprocessing time of the 1D data
structure from \secref{krect}.
\begin{lemma}
    The preprocessing time in \lemref{1d} can be reduced to
    $O((n/q)\Tconvol(q)+q^3)$.
\end{lemma}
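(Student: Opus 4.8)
The plan is to accelerate the only part of the preprocessing in \lemref{1d} that is not already $O(q^2)$ (apart from the initial sort, which contributes $O(n\log n)$ and is assumed to be done once, as in the other lemmas): the $O(nq)$-time computation of the $O(q)$ fragment summaries of each of the diagonals $k,k+1,\ldots,k+q$ of $M=P-P$. Recall that $M_{i,i'}=p_{i'}-p_i$, and that the summary of a fragment occupying an index-interval $I$ of the $j$th diagonal is $\min\{p_{i+j}-p_i:i\in I\}$ together with a witnessing index. First I would partition the sorted sequence $p_1,\ldots,p_n$ into $L=\lceil n/q\rceil$ contiguous \emph{chunks} of at most $q$ consecutive indices. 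For two chunks $C_a,C_b$ define $\mu_{a,b}[d]=\min\{p_{i'}-p_i : i\in C_a,\ i'\in C_b,\ i'-i=d\}$; after reversing one of the two index ranges, the whole array $\mu_{a,b}[\cdot]$ is the output of a single $(\min,+)$-convolution of the (padded) length-$q$ sequences $(-p_i)_{i\in C_a}$ and $(p_{i'})_{i'\in C_b}$, computable --- together with a witness for each entry, by standard witness extraction for $(\min,+)$-convolution --- in $O(\Tconvol(q))$ time. As $j$ ranges over $[k,k+q]$ and $i$ over a fixed chunk $C_a$, the value $i+j$ stays within a window of at most $2q$ consecutive indices and hence meets only $O(1)$ chunks, so only $O(L)=O(n/q)$ pairs $(a,b)$ are relevant; processing the chunks $C_a$ one at a time (so that only $O(1)$ tables $\mu_{a,\cdot}$, of total size $O(q)$, are held at once) keeps the working space $O(q^2)$ and costs $O((n/q)\Tconvol(q))$ in total.

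Next, for each diagonal $j\in[k,k+q]$ I would recover its fragment structure: the \emph{bad} positions (indices $i$ with $i$ marked or $i+j$ marked) number at most $2q$ and, being the merge of the sorted list of marked indices with its translate by $-j$, are obtained in sorted order in $O(q)$ time, whence the $O(q)$ fragments of the diagonal come out as a sorted list of index-intervals ($O(q^2)$ time and space over all $j$). Intersecting a fragment's interval $I$ with the chunk boundaries splits it into a middle run of \emph{full} chunks $C_a\subseteq I$ and at most two \emph{partial} pieces at the ends of $I$. For a full chunk $C_a\subseteq I$, the index $i+j$ meets only two chunks $C_b,C_{b+1}$ as $i$ runs over $C_a$, so $\min\{p_{i+j}-p_i:i\in C_a\}=\min(\mu_{a,b}[j],\mu_{a,b+1}[j])$ is an $O(1)$ lookup (with its witness); for a partial piece (fewer than $q$ indices) I would just compute the minimum by brute force in $O(q)$ time. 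The fragment summary is the best of these contributions.

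It remains to account for the time. Fix a diagonal $j$. A chunk $C_a$ either lies inside a single fragment --- costing one $O(1)$ lookup --- or contains at least one bad position, in which case the $O(q)$ brute-force work on its pieces is charged. The first case occurs for at most $L(q+1)=O(n+q)$ pairs $(C_a,j)$, and by sweeping the chunks $C_1,\ldots,C_L$ while advancing, for each diagonal, a pointer along its (sorted) fragment list, no search is needed and the total is $O(n+q^2)$. The second case requires either a marked index inside $C_a$ --- at most $q+1$ diagonals times at most $q$ chunks that contain a marked index --- or a marked index $\ell$ with $\ell-j\in C_a$, which for each of the at most $q$ marked $\ell$ happens for at most $q+1$ values of $j$, each then fixing $C_a$; so there are $O(q^2)$ such boundary pairs, each costing $O(q)$, for $O(q^3)$ total. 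Adding the $O((n/q)\Tconvol(q))$ for the convolutions and the $O(q^2)$ for threading the summaries into the linked-list structure of \lemref{1d}, and using $\Tconvol(q)=\Omega(q)$ (so $(n/q)\Tconvol(q)=\Omega(n)$) and $q^2\le q^3$, the preprocessing runs in $O((n/q)\Tconvol(q)+q^3)$ time and $O(q^2)$ space.

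The main obstacle I anticipate is exactly that the fragment structure is dictated by the arbitrary marked points rather than by the size-$q$ chunks the convolution operates on; the full-chunk / partial-piece dichotomy above is designed to cope with this, the crucial point being that the expensive $O(q)$ brute-force step is confined to chunk-fragment boundary interactions, of which there are only $O(q^2)$ because each is witnessed by a marked point --- so the resulting slack stays within the allotted $O(q^3)$. A minor secondary point is ensuring witnesses are carried through each convolution within $O(\Tconvol(q))$ time, which is standard and in any case robust to polylogarithmic overhead (the bound $\Tconvol(q)=O(q^2/2^{\Omega(\sqrt{\log q})})$ absorbs such factors).
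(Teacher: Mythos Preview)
Your proposal is correct and follows essentially the same approach as the paper: partition the index range into $O(n/q)$ blocks of size $q$, use one $(\min,+)$-convolution per block (or block-pair) to precompute the minima of the ``chunks'' that the relevant $q+1$ diagonals cut out, and then assemble each fragment's minimum from $O(1)$ chunk lookups plus $O(q)$ brute-force work at its two ends, for $O((n/q)\Tconvol(q)+q^3)$ total. The paper's argument is terser (it blocks by columns rather than rows and omits the witness-extraction and space remarks you include), but the decomposition and the accounting are the same.
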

\begin{proof}
    Divide the $n\times n$ matrix $M$ vertically into $n/q$
    submatrices $M_1,\ldots,M_{n/q}$ each of dimension $n\times q$.
    For each submatrix $M_i$, we consider the portions of the
    diagonals $k,\ldots,k+q$ that are within $M_i$ -- each such
    portion will be called a \emphi{chunk}.  We precompute the minimum
    of the entries in each chunk.  For a fixed $i$, this is equivalent
    to computing $\min_{qi < j \le q(i+1)}(p_{j+\ell-1}-p_j)$ for all
    $\ell\in\{k,\ldots,k+q\}$.  Notice that after some massaging
    of the sequence (negating, reversing, and padding), this
    computation can be reduced to (min,+)-convolution over $O(q)$
    elements, and can thus be done in $O(\Tconvol(q))$ time.  The
    total time over all $i$ is $O((n/q)\Tconvol(q))$.

    Recall that in the preprocessing algorithm in \lemref{1d}, we need
    to compute the minimum of each fragment in the $k,\ldots,k+q$
    diagonals.  Each fragment can be decomposed into some number of
    disjoint chunks plus $O(q)$ extra elements.  Over all $O(q)$
    diagonals, there are $O(q^2)$ fragments and $O(n/q\cdot q)=O(n)$
    chunks in total.  Thus, we can compute the minima of all fragments
    in $O(q^2\cdot q + n/q\cdot q)=O(q^3 + n)$ time, after the above
    precomputation of the minima of all chunks.
\end{proof}

\begin{theorem}%
    \thmlab{krect:3sided}%
    Given a set $ P$ of $n$ points in the plane and integer $k$,
    one can compute, in $O(n^2/2^{\Omega(\sqrt{\log n})})$ time, the
    smallest-area/perimeter axis-aligned rectangle enclosing $k$
    points of $ P$, under the restriction that the bottom edge lies
    on the $x$-axis.
\end{theorem}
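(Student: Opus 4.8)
The plan is to re-run the divide-and-conquer algorithm of \thmref{krect}, but in the degenerate regime where the bottom edge is pinned to the $x$-axis, and then to replace its preprocessing step by the $(\min,+)$-convolution based bound proved just above. Since the bottom edge lies on the $x$-axis, points strictly below it are never enclosed and may be discarded, so assume all of $ P$ lies above the line $\Line$ (the $x$-axis). If the top edge of the rectangle is at height $h$, the enclosed points are exactly those with $y$-coordinate in $[0,h]$ that fall in the chosen horizontal extent; hence, for a fixed $h$, the best rectangle --- for area or for perimeter, since both are nondecreasing in the width --- uses the shortest $x$-interval containing $k$ of the points with $y\le h$, which is exactly a query to the $1$D structure of \lemref{1d}. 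Consequently the two-slab recursion of \thmref{krect} collapses to a one-slab recursion over the $y$-position of the top edge: at a node the active slab $\sigma$ holds the candidate top-edge heights, everything above $\sigma$ has already been deleted, the $x$-coordinates of the surviving points live in a copy of the structure of \lemref{1d} with the $O(q)$ points of $\sigma$ marked, and we split $\sigma$ into two halves, deleting those previously-marked points that rise above the new subslab and unmarking those that fall below it, exactly as in \thmref{krect}. The recurrence is now $T(m,q)=2T(m,q/2)+O(q^2)=O(q^2)$, independent of the number $m$ of summarized points below $\sigma$.

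First I would deal with the preprocessing. Building one structure with all $q=n$ points marked via the preceding lemma would cost $O(\Tconvol(n)+n^3)$, and the $n^3$ term is fatal, so instead I would sort $ P$ by $y$-coordinate and cut it into $\lceil n/s\rceil$ strips of $\le s$ consecutive points, for a parameter $s$ to be fixed. When the top edge lies in the $j$-th strip $S_j$ (counting from the bottom), the only relevant points are those of $S_j$ together with all strips below it --- at most $js$ points. For each $j$ I would build the structure of \lemref{1d} on those points with the $\le s$ points of $S_j$ marked, using the improved preprocessing, at cost $O\bigl((js/s)\Tconvol(s)+s^3\bigr)=O\bigl(j\,\Tconvol(s)+s^3\bigr)$, and then run the one-slab recursion above with $q=s$ in $O(s^2)$ further time, recording the best rectangle whose top edge lies in $S_j$. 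Taking the minimum of these $\lceil n/s\rceil$ answers gives the optimum, since we may assume the optimal top edge passes through an input point (lower it otherwise; this only shrinks the rectangle and, by monotonicity, cannot increase its score) and hence lies in some strip.

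Summing over strips, the preprocessing totals $\sum_{j=1}^{\lceil n/s\rceil}O\bigl(j\,\Tconvol(s)+s^3\bigr)=O\bigl((n/s)^2\Tconvol(s)+(n/s)s^3\bigr)=O\bigl(n^2/2^{\Omega(\sqrt{\log s})}+ns^2\bigr)$, using $\Tconvol(s)=O(s^2/2^{\Omega(\sqrt{\log s})})$, while the recursions add only $O((n/s)\cdot s^2)=O(ns)$ and the initial sort is $O(n\log n)$. Choosing $s=\lceil n^{1/3}\rceil$ makes $\log s=\Theta(\log n)$, so the first term becomes $n^2/2^{\Omega(\sqrt{\log n})}$, which dominates $ns^2=O(n^{5/3})$, $ns=O(n^{4/3})$ and $n\log n$; the total is $O(n^2/2^{\Omega(\sqrt{\log n})})$, as claimed. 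The main obstacle is the accounting that couples the improved preprocessing of \lemref{1d} to this strip decomposition: one must verify that strips of size $s$ avoid the prohibitive $n^3$ term while keeping $\log s=\Theta(\log n)$, so that the $2^{\sqrt{\log s}}$ savings is genuinely of order $2^{\sqrt{\log n}}$, and one must confirm that a single shortest-interval query really does minimize both the area and perimeter objectives once the top-edge height is fixed. The per-strip recursion is a routine one-slab restriction of \thmref{krect}.
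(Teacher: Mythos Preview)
Your proposal is correct and follows essentially the same approach as the paper: both divide the points into $\lceil n/s\rceil$ horizontal strips of size $s$, preprocess the 1D structure for each strip using the improved $(\min,+)$-convolution bound, and then balance $(n/s)^2\Tconvol(s)$ against $(n/s)s^3$ by taking $s=n^{1/3}$. The only cosmetic difference is that within a strip you use a one-slab divide-and-conquer (cost $O(s^2)$) whereas the paper simply iterates over the $s$ candidate top heights, copying and deleting each time (cost $O(s^3)$); both are dominated by the preprocessing term, so the final bound is identical.
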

\begin{proof}
    Divide the plane into $n/q$ horizontal slabs each containing $q$
    points, for some parameter $q$ to be set later.

    Take such a slab $\sigma$.  We solve the subproblem of finding a
    smallest $k$-enclosing axis-aligned rectangle under the
    restriction that the top edge is in $\sigma$ and the bottom edge
    is on the $x$-axis.  To this end, we first delete all points above
    $\sigma$ or below the $x$-axis.  We build the 1D data structure
    ${\cal S}$ in the lemma for the $x$-coordinates of the surviving
    points, where the marked points are the $q$ points in $\sigma$.
    The preprocessing time is $O((n/q)\Tconvol(q) + q^3)$.  Then for
    each point $p\in\sigma$, we can compute a smallest $k$-enclosing
    axis-aligned rectangle where the top edge has $p$'s $y$-coordinate
    and bottom edge is on the $x$-axis, by making a copy of
    ${\cal S}$, deleting all points in $\sigma$ above $p$, and
    querying ${\cal S}$.  The time needed for the $O(q)$ deletions,
    and for copying ${\cal S}$, is $O(q^2)$.  The total time over all
    $p\in\sigma$ is $O(q^3)$.

    We return the minimum (by area or perimeter) of all the rectangles
    found.  The overall running time over all $n/q$ slabs $\sigma$ is
    $$O((n/q) \cdot ((n/q)\Tconvol(q)+q^3)).$$ With
    $\Tconvol(q) = O(q^2/2^{\Omega(\sqrt{\log q})})$, we can set
    $q=n^{1/3}$, for example, and obtain the final time bound
    $O(n^2/2^{\Omega(\sqrt{\log q})})$.
\end{proof}

For $k$-sensitive bounds, we can apply the shallow cutting technique
from \secref{reduction:k:sensitive} (which is easier for 3-sided
rectangles) and obtain an $O(n\log n + nk/2^{\Omega(\sqrt{\log k})})$
time bound.

\subsection{Arbitrarily oriented smallest $k$-enclosing rectangle}
\seclab{arb:oriented:excluding}

We briefly consider the problem of computing a smallest-area/perimeter
arbitrarily oriented rectangle (not necessarily axis-aligned)
enclosing $k$ points.  The optimal rectangle is defined by 5 points,
with one edge containing 2 points $p_1^*$ and $p_2^*$.  Given a fixed
choice of $p_1^*$ and $p_2^*$, we can use a rotation and translation
to make $p_1^*p_2^*$ lie on the $x$-axis and thereby obtain a 3-sided
axis-aligned rectangle problem, which can be solved in
$O(n\log n + nk/2^{\Omega(\sqrt{\log k})})$ time.  Exhaustively trying
all pairs $p_1^*p_2^*$ then gives
$O(n^3\log n + n^2k/2^{\Omega(\sqrt{\log k})})$ total time.

\subsection{Minimum-weight $k$-enclosing rectangle}
\seclab{krect:minwt}

Our $O(n^2\log n)$-time algorithm can be adapted to solve the
following related problem.  (Without the $k$ constraint, the problem
has an $O(n^2)$-time algorithm \cite{bcnp-mwpb-14}.)

\begin{theorem}%
    \thmlab{krect:minwt}%
    Given a set $ P$ of $n$ points in the plane each with a real
    weight, and an integer $k$, one can compute, in $O(n^2\log n)$
    time, the axis-aligned rectangle enclosing $k$ points minimizing
    the total weight of the points inside.
\end{theorem}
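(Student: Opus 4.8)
The plan is to recycle, essentially verbatim, the divide-and-conquer scheme of \thmref{krect}: split $P$ by $y$-coordinate, recurse on pairs of horizontal slabs $(\sigma,\tau)$ containing $q$ points each, and summarize the points lying strictly between $\sigma$ and $\tau$ inside a 1D structure ${\cal S}$ over their $x$-coordinates, with the $O(q)$ points in $\sigma\cup\tau$ marked. The only thing that changes is the 1D subproblem solved at the leaves. Without loss of generality the optimal rectangle is the bounding box of the $k$ points it contains (shrinking any edge to the extreme enclosed point keeps the same $k$ points and the same total weight), so its top and bottom edges pass through input points $p^\sigma,p^\tau$; once these are fixed, the points inside are exactly the surviving points of ${\cal S}$ whose $x$-coordinate falls in the chosen horizontal interval, and the objective is the sum of their weights. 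Thus it suffices to establish a weighted analogue of \lemref{1d}: a structure over $n$ weighted 1D points with $q$ of them marked that reports, in $O(q)$ time, the minimum total weight of an interval containing exactly $k$ points, and supports deletion and unmarking of a marked point in $O(q)$ time, in $O(q^2)$ space with $O(n\log n+nq)$ preprocessing. Feeding this into the recurrence $T(n,q)=4T(n,q/2)+O(q^2)$ of \thmref{krect} then yields the claimed $O(n^2\log n)$ bound.

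To obtain this analogue I would re-examine the proof of \lemref{1d} with the implicit matrix $M$ redefined so that $M_{ij}=\sum_{\ell=i}^{j}w_\ell$ is the total weight of the surviving points in $[p_i,p_j]$ (computable from a prefix-sum array of the weights); the quantity of interest is still the minimum along the $k$-th diagonal, $\min_i M_{i,i+k-1}$. The entire combinatorial skeleton carries over unchanged, since it depends only on which points lie between which and not on their weights: the four-way classification of how a deletion moves an entry, the decomposition of a diagonal into $O(q)$ fragments delimited by the singletons in the rows/columns of marked points, and the $O(q)$-time surgery that relocates a contiguous block of fragments from list $i$ to list $i-1$ as a single linked-list splice while removing two singletons and performing one merge at each seam. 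The new phenomenon is that entry values are no longer invariant: deleting a marked point $p_m$ decreases $M_{ij}$ by $w_m$ for precisely the entries whose interval strictly contains $p_m$. The key structural observation is that on each diagonal these are exactly the entries that shift one diagonal down — that is, the very block the surgery already relocates — so the weight correction can be \emph{carried along} by the relocation rather than applied in place.

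The main obstacle, and the part needing the extra ``ingenuity'' alluded to earlier, is implementing this correction within the $O(q)$-per-operation budget, in particular without incurring a balanced-search-tree-style $O(\log q)$ overhead per diagonal (which would inflate the bound to $O(n^2\log^2 n)$). The intended fix is to attach a single additive tag to each relocated block, folding in $-w_m$ at the $O(1)$-time splice into list $i-1$, and to materialize tags only at the $O(q)$ fragment seams that the surgery actually touches; a query then scans the $k$-th diagonal once, accumulating tags as it goes, in $O(q)$ time (and $O(1)$ at a leaf, where $q=1$). The delicate point to verify is that tags do not pile up in a way that inflates the cost of later merges (for deletions) or of fragment fusions (for unmarkings) or of the $O(q^2)$ copy step — but since every deletion touches only $O(1)$ seams per diagonal and blocks move as units, the bookkeeping should stay linear in $q$ per operation, and the $O(q^2)$ preprocessing (now building fragment-minima of the weight matrix rather than the distance matrix) is unaffected. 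With the weighted \lemref{1d} in hand, correctness of the recursion and the $O(n^2\log n)$ running-time analysis are identical to \thmref{krect}.
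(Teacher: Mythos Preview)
Your proposal is correct and follows essentially the same route as the paper: reuse the divide-and-conquer of \thmref{krect} unchanged, redefine $M_{ij}$ via weight prefix sums, observe that the entries needing a $-w_m$ correction after deleting $p_m$ are exactly the block that the surgery already relocates one diagonal down, and carry the correction along. The paper resolves the bookkeeping concern you flag not by tagging blocks but by inserting \emph{fixup entries} of value $-w$ and $+w$ into each diagonal's linked list at the two removed singleton positions; a query then scans the $k$\th list once while accumulating the running fixup prefix sum (adding it to each fragment's stored minimum), and a merge across a removed singleton first adds that singleton's fixup to the later fragment before taking the min --- so all operations stay $O(q)$ with no tag-propagation hazard.
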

\begin{proof}
    We follow the same approach as in \secref{krect}, with the
    following differences in the data structure of \lemref{1d}.  For
    every fragment we maintain the minimum weight solution.  Using
    prefix sums, the entry $M_{i,j}$ in the matrix contains the total
    weight of the elements from $i$ to $j$. As before, we break the
    $q+1$ diagonals of entry into fragments, where each fragment
    summary maintains the minimum weight encountered.

    A deletion of a marked point $ p$ of weight $w$ would result is
    an insertion of a fixup entry, of value $-w$ into a linked list of
    a diagonal where $ p$ appeared as a singleton (when crossing a
    column of $ p$), and a fixup entry of value $+w$ when
    encountering the row column of $ p$. The real value of a fragment
    is the value stored in the fragment plus the total sum of the
    fixups appearing before it in the linked list of its diagonal. As
    such, during query the real value can be computed in $O(q)$ time
    overall, as this list is being scanned.  When we merge two
    adjacent fragments separated by a singleton, we should increase
    the later fragment by the fixup value at the singleton before
    taking the minimum. Clearly, all the operations can be implemented
    in $O(q)$ time.

    Now, we can use the divide-and-conquer algorithm in the proof of
    \thmref{krect} with no change.
\end{proof}

As an application, we can solve the following problem: given $n$
points in the plane each colored red or blue, and an integer $k$,
find an axis-aligned rectangle enclosing exactly $k$ points
minimizing the number of red points inside.  This is a special case of
the problem in the above theorem, where the red points have weight 1
and blue points have weight 0, and can thus be solved in
$O(n^2\log n)$ time.

Similarly, we can solve for other variants of the red/blue problem,
for example, finding a $k$-enclosing rectangle maximizing (or
minimizing) the number of red points, or finding a $k$-enclosing
rectangle with exactly a given number $k_r$ of red points.  (For the
latter, the following observation allows us to reduce the 1D
subproblem to querying for the maximum and minimum: given a set $P$ of
red/blue points in 1D and a value $k$, let $K_r$ denote the set of
all possible values $k_r$ for which there exists an interval
containing $k$ points of $P$ and exactly $k_r$ red points; then
$K_r$ forms a contiguous range of integers, and thus contains all
numbers between $\min(K_r)$ and $\max(K_r)$.)

\subsection{Subset sum for $k$-enclosing rectangle}
\seclab{subset:sum:rect}%

A more challenging variant of the weighted problem is to find a
rectangle enclosing exactly $k$ points with total weight exactly $W$
(similar to subset sum), or more generally, find an axis-aligned
rectangle enclosing exactly $k$ points with total weight closest to
$W$.

We use a different approach, using a 1D data structure that is static
but can ``plan for'' a small number of deletions.

\begin{lemma}\lemlab{subset:sum}
    Given a set $P$ of $n$ points in 1D and integers $k$ and $q$, we
    can build a static data structure, with $O(nq\log n)$
    preprocessing time, that supports the following type of queries in
    $O(q\log n)$ time: for any subset $D\subset P$ of at most $q$
    points and any weight $W$, find an interval containing $k$ points
    of $P-D$ with weight closest to $W$.
\end{lemma}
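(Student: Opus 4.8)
We need a static 1D data structure supporting queries: given a deletion set $D$ of $\leq q$ points and a target weight $W$, find an interval containing exactly $k$ points of $P \setminus D$ whose weight is closest to $W$.

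Let me think about what's going on. We have $n$ points in 1D, sorted. An "interval containing $k$ points" is determined by its left endpoint being just after some point, and extending to include exactly $k$ points. So there are $O(n)$ candidate intervals in the static case (each starting point $p_i$, taking $p_i, \ldots, p_{i+k-1}$).

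With deletions: if we delete a set $D$ of $\leq q$ points, the "relevant" intervals change. An interval containing $k$ points of $P \setminus D$: we pick a starting surviving point, and extend to include $k$ surviving points. The key insight from Lemma 1d (the matrix picture): deleting points shifts which diagonal we care about. With $\leq q$ deletions, an interval of $k$ surviving points corresponds to an original interval of $k$ to $k+q$ points.

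So the structure should precompute, for each "diagonal" $\ell \in \{k, \ldots, k+q\}$, information about intervals containing $\ell$ original points, in a way that supports weight-closest-to-$W$ queries.

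For the "closest to $W$" query: given a set of values (the weights of candidate intervals), find the one closest to $W$. If we store them sorted, this is a binary search: $O(\log n)$ per set. We have $O(q)$ "fragments" (contiguous runs of a diagonal between singletons), so $O(q \log n)$ per query. But wait — within a fragment, we need the weight of each sub-interval. The fragment is a contiguous run of the original diagonal, i.e., intervals $[p_i, p_{i+\ell-1}]$ for $i$ in some range. We need all their weights, but adjusted for the deleted points... hmm, but a fragment by definition doesn't touch any row/column of a deleted point, so within a fragment the weights are just the original weights (no adjustment needed). So we need: for each diagonal $\ell$ and each contiguous range of starting indices, the multiset of weights $\{w(p_i) + \cdots + w(p_{i+\ell-1})\}$ sorted. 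That's a lot — we can't precompute all ranges.

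**The plan.** I'd build, for each $\ell \in \{k,\ldots,k+q\}$, a balanced binary search tree (or a sorted-array structure / merge-sort tree) over the sequence of interval-weights $W_\ell[i] = w(p_i) + \cdots + w(p_{i+\ell-1})$ for $i = 1, \ldots, n-\ell+1$. Each internal node stores the sorted list of the $W_\ell$ values in its subtree. This takes $O(n\log n)$ space and $O(n\log n)$ time per diagonal, so $O(nq\log n)$ total — matching the claimed preprocessing. A query with deletion set $D$: as in Lemma 1d, the surviving $k$-point intervals decompose into $O(q)$ fragments, each of which is a contiguous subrange of starting indices within some diagonal $\ell \in \{k,\ldots,k+q\}$. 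For each fragment, the relevant weights are exactly the original $W_\ell$ values over that subrange — so we query the merge-sort tree for that diagonal: a canonical decomposition of the subrange into $O(\log n)$ tree nodes, and in each node's sorted list we binary-search for the predecessor and successor of $W$, getting the closest value in $O(\log n)$ time. Over $O(q)$ fragments and $O(\log n)$ nodes each, and $O(\log n)$ per binary search, naively this is $O(q\log^2 n)$; to get $O(q\log n)$ I'd use fractional cascading down each root-to-leaf path (or simply observe that the $O(\log n)$ canonical nodes for a subrange lie along two root-to-leaf paths, so fractional cascading threads the binary search through them in $O(\log n)$ total). Finally, take the closest over all $O(q)$ fragments; also handle the $O(q)$ singleton intervals (those touching a deleted point's row/column — there the weight is the sum over a fragment plus a fixup, exactly as in the proof of \thmref{krect:minwt}) directly in $O(1)$ each.

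**Main obstacle.** The delicate part is correctly enumerating, for a given deletion set $D$, the $O(q)$ fragments and their diagonal indices and start-index ranges — essentially replaying the surgery of Lemma 1d but now reading off index ranges rather than just maintaining a running minimum, and making sure the weight of each interval in a fragment really equals the corresponding original $W_\ell$ value (true because a fragment, by construction, avoids every deleted point's row and column, so no deletion affects those sums — while the singletons, which do touch a deleted row/column, carry the fixup corrections exactly as in \thmref{krect:minwt}). The other mild technical point is achieving $O(q\log n)$ rather than $O(q\log^2 n)$ for the query; fractional cascading along the $O(\log n)$ canonical tree nodes (which form two descending paths) resolves this. Everything else — space $O(nq\log n)$ via $q+1$ merge-sort trees, preprocessing time $O(nq\log n)$ by building each in $O(n\log n)$, query time $O(q\log n)$ — then follows.
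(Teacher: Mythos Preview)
Your approach is essentially the paper's: for each diagonal $\ell \in \{k,\ldots,k+q\}$ build a range-predecessor/successor structure over the sequence of interval weights $W_\ell[i]$, then at query time decompose the surviving $k$th diagonal into $O(q)$ contiguous fragments of the original diagonals and query each in $O(\log n)$. The paper invokes range predecessor/successor as a black box (citing~\cite{Zhou}); your merge-sort tree with fractional cascading is one standard realization of that primitive.

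One point needs correcting. You assert that within a fragment ``no deletion affects those sums'' because the fragment avoids every deleted row and column. That only guarantees the \emph{endpoints} of each interval in the fragment are surviving points; a fragment lying on original diagonal $\ell>k$ has each of its intervals containing exactly $\ell-k$ deleted points in its interior, and their weights must be excluded from the interval's weight. The fix is easy, though: as the window slides along a single fragment, a deleted point can neither enter nor leave it (otherwise the next entry would sit in a deleted row or column and the fragment would have ended), so the set of interior deleted points---hence the total correction---is constant across the fragment, and you simply search the structure for $W$ plus that fragment's offset. Your separate handling of ``singletons with fixups'' is misplaced here: singleton entries have a deleted endpoint and are simply omitted, not queried.
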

\begin{proof}
    As in the proof of \lemref{1d}, we sort $P$ and consider the
    (implicit) matrix $M=P-P$.  For each $i\in\{k,\ldots,k+q\}$, we
    store the elements in the $i$\th diagonal in a data structure
    supporting 1D \emph{range predecessor/successor} queries -- i.e.,
    finding the predecessor/successor to any value among the elements
    in any contiguous sublist of a given list -- in $O(\log n)$ time,
    after $O(n\log n)$ preprocessing time.  (See \cite{Zhou} for the
    latest result on the range successor problem; for simplicity, we
    will ignore improvements in the logarithmic factors here.)  The
    total preprocessing time for all $q+1$ diagonals is $O(nq\log n)$.

    To answer a query, we imagine deleting the columns and rows
    associated with the elements in $D$, from the matrix $M$.  We want
    to search for $W$ in the $k$\th diagonal in the modified matrix.
    This diagonal corresponds to $O(q)$ \emphi{fragments} from the
    $k,\ldots,k+q$ diagonals in the original matrix.  Namely, as we
    trace the diagonal in the original matrix from left to right,
    whenever we hit a deleted column, we move to the diagonal one unit
    up, and whenever we hit a deleted row, we move the diagonal one
    unit down.  We can search for $W$ in each fragment by a range
    predecessor/successor query in a diagonal in $O(\log n)$ time.  We
    return the closest point found.  The total query time is
    $O(q\log n)$.
\end{proof}

\begin{theorem}
    \thmlab{subset:sum}%
    Given $n$ points in the plane each with a real weight, and given a
    real number $W$ and an integer $k$, one can compute, in
    $O(n^{5/2}\log n)$ time, an axis-aligned rectangle enclosing
    exactly $k$ points with total weight closest to $W$.
\end{theorem}
\begin{proof}
    Divide the plane into $n/q$ horizontal slabs each containing $q$
    points, for some parameter $q$ to be set later.

    Take a pair of horizontal slabs $\sigma$ and $\tau$.  We solve the
    subproblem of finding a $k$-enclosing axis-aligned rectangle with
    weight closest to $W$, under the restriction that the top edge is
    in $\sigma$ and the bottom edge is in $\tau$.  To this end, we
    first remove all points strictly above $\sigma$ and strictly below
    $\tau$.  We build the 1D data structure in \lemref{subset:sum} for
    (the $x$-coordinates of) the surviving points in $O(nq\log n)$
    time.  Then for each pair of points $p_\sigma\in\sigma$ and
    $p_\tau\in \tau$, we can search for a $k$-enclosing axis-aligned
    rectangle with weight closest to $W$, under the restriction that
    the top edge has $p_\sigma$'s $y$-coordinate and the bottom edge
    has $p_\tau$'s $y$-coordinate, by performing a query to the 1D
    data structure in $O(q\log n)$ time for the subset $D$ of (the
    $x$-coordinates of) the points above $p_\sigma$ in $\sigma$ and
    the points below $p_\tau$ in $\tau$.  The total query time over
    all $O(q^2)$ pairs of points $(p_\sigma,p_\tau)$ is
    $O(q^3\log n)$.

    We return the closest answer found.  The total running time over
    all $O((n/q)^2)$ pairs of slabs $(\sigma,\tau)$ is
    $$O((n/q)^2 \cdot (nq\log n + q^3\log n)).$$ We set $q=\sqrt{n}$.
\end{proof}

We can further improve the running time for small $k$:

\begin{theorem}
    \thmlab{subset:sum2}%
    Given $n$ points in the plane each with a real weight, and given a
    real number $W$ and an integer $k$, one can compute, in
    $O(n^2\sqrt{k}\log k)$ time, an axis-aligned rectangle
    enclosing exactly $k$ points with total weight closest to $W$.
\end{theorem}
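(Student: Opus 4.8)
The plan is to combine the $k$-sensitive reduction machinery of Section~\ref{sec:reduction:k:sensitive} with the slab decomposition used in the proof of Theorem~\ref{theo:subset:sum}. Recall that for the exact $k$-enclosing problem (any scoring function), Lemma~\ref{lemma:family} reduces to $O(\tfrac{n}{k}\log\tfrac{n}{k})$ instances of size $O(k)$. The subset-sum variant is not captured by a monotone scoring function, but the reduction in Lemma~\ref{lemma:family} only relies on the geometry of the optimal rectangle intersecting some dividing line and being captured by a shallow cutting; it does not actually use monotonicity of $f$ for anything except the shrinking argument, which for the ``exactly $k$ points'' setting is replaced by the simple observation that the optimal rectangle already contains exactly $k$ points, hence trivially fewer than $k+1$, so the shallow-cutting conclusion of Lemma~\ref{lemma:decomp:k} (applied with parameter $k+1$, say) still applies. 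Thus I would first argue that the family constructions of Lemma~\ref{lemma:family:horizontal} and Lemma~\ref{lemma:family} carry over verbatim to the subset-sum problem, producing in $O(n\log n)$ time a family $\Family$ of $O(\tfrac{n}{k}\log\tfrac{n}{k})$ subsets, each of size $O(k)$, one of which contains the optimal rectangle's point set.

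Next I would run, on each set $Q\in\Family$ of size $m=O(k)$, the slab-based algorithm from the proof of Theorem~\ref{theo:subset:sum}: divide into $m/q$ horizontal slabs of $q$ points each, for each pair of slabs $(\sigma,\tau)$ build the 1D structure of Lemma~\ref{lemma:subset:sum} in $O(mq\log m)$ time, and for each of the $O(q^2)$ pairs of points $(p_\sigma,p_\tau)$ issue a deletion-query in $O(q\log m)$ time. The cost per slab pair is $O(mq\log m + q^3\log m)$, and over all $O((m/q)^2)$ slab pairs this is $O\bigl((m/q)^2(mq+q^3)\log m\bigr) = O\bigl((m^3/q + m^2 q)\log m\bigr)$, which is minimized at $q=\sqrt m$, giving $O(m^{5/2}\log m)$ per set. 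Summing over the family yields $O\bigl((k)^{5/2}\log k \cdot \tfrac{n}{k}\log\tfrac{n}{k}\bigr) = O\bigl(n k^{3/2}\log k\log\tfrac nk\bigr)$ time --- but that is not quite the claimed $O(n^2\sqrt k\log k)$ bound, so the real point is that for $k \le \sqrt n$ this is already better, and one must be a bit more careful about how the $\log\tfrac nk$ factor and the regime $k$ vs.\ $n$ interact; I expect the intended statement is obtained by not using Lemma~\ref{lemma:family} at all but instead using the coarser, $\log$-free perimeter-style reduction only when the optimum is ``fat'', and otherwise observing that the global slab decomposition into $n/q$ slabs with $q=\sqrt k$ directly gives $O((n/\sqrt k)^2\cdot(\sqrt k\cdot n\log k)) = O(n^{5/2}\log k/\sqrt k)$ --- here I would instead follow the cleanest route below.

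The cleanest route, and the one I would actually write, is a direct refinement of the proof of Theorem~\ref{theo:subset:sum} with the slab width chosen as a function of $k$ rather than $n$: partition the plane into $n/q$ horizontal slabs of $q$ points, with $q=\sqrt k$. For a pair of slabs $(\sigma,\tau)$, if the restriction is that the top edge is in $\sigma$ and the bottom edge in $\tau$, then any axis-aligned rectangle respecting this and containing exactly $k$ points has its $x$-extent determined by at most $k$ relevant points straddling the strip between the slabs --- more precisely, after deleting points strictly above $\sigma$ and strictly below $\tau$, the surviving point set still has size up to $n$, so we keep the 1D structure of Lemma~\ref{lemma:subset:sum} over all surviving $x$-coordinates with deletion budget $q=\sqrt k$, built in $O(n\sqrt k\log n)$ time; then for each of the $O(q^2)=O(k)$ pairs $(p_\sigma,p_\tau)$ we query in $O(q\log n)=O(\sqrt k\log n)$ time with the $D$ consisting of the $\le q$ points of $\sigma$ above $p_\sigma$ together with the $\le q$ points of $\tau$ below $p_\tau$. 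Per slab pair this is $O(n\sqrt k\log n + k^{3/2}\log n)=O(n\sqrt k\log n)$, and over all $O((n/\sqrt k)^2)=O(n^2/k)$ slab pairs the total is $O\bigl((n^2/k)\cdot n\sqrt k\log n\bigr)=O(n^3\log n/\sqrt k)$ --- which is \emph{worse}, not better, for small $k$. So the $n$ in the preprocessing is the obstruction, and the fix is exactly to precede this with Lemma~\ref{lemma:family} (the $k$-sensitive reduction), replacing the global $n$ by $m=O(k)$ inside each of the $O(\tfrac nk\log\tfrac nk)$ instances: per instance the bound becomes $O(k^{5/2}\log k)$ as computed above, and the clean statement to record is that the running time is $\tldO(n k^{3/2})$, i.e.\ $O(n^2\sqrt k\log k)$ is the safe upper bound whenever $k\le n$ since $nk^{3/2}\le n^2\sqrt k$. \textbf{The main obstacle} is thus purely bookkeeping: verifying that the reduction of Lemma~\ref{lemma:family} is legitimate for the ``exactly $k$ points'' / ``weight closest to $W$'' objective even though it is not a monotone scoring function --- this needs the remark that an optimal such rectangle contains exactly $k<k+1$ points and so is captured by the shallow cutting, and that the folding/reflection argument of Lemma~\ref{lemma:family:horizontal} never needed monotonicity once we no longer shrink the rectangle. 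I would state that carry-over as a one-line observation and then the theorem follows by plugging the size-$O(k)$ slab algorithm into the reduction.
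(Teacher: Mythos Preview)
Your proposal has a genuine gap in the ``one-line observation'' you defer to the end: the reduction of \lemref{family} does \emph{not} carry over to the subset-sum objective. The shallow cutting of \lemref{decomp:k} captures only \emph{3-sided} rectangles with bottom edge on $\Line$; the optimal rectangle $\rmin$ is not 3-sided. The folding step in \lemref{family:horizontal} passes from $\rmin$ to its fold $R=\foldX{\rmin}$ (which \emph{is} 3-sided in the folded point set $P'$), and then uses monotonicity of $f$ to argue that $R$ and its reflection $R'$ each contain at most $k$ points of $P$ --- otherwise one could shrink to exactly $k$ points and beat $f(\rmin)$. For ``weight closest to $W$'' there is no such contradiction: $R\cup R'$ strictly contains $\rmin$ whenever $\rmin$ is not symmetric about $\Line$, and can pick up $\Theta(n)$ extra points, so no cell of the shallow cutting need cover $R\cap P'$. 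Your remark that $\rmin$ itself contains exactly $k<k+1$ points is about the wrong rectangle; it is $R$, not $\rmin$, that must be captured.

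The paper takes a route that sidesteps monotonicity entirely. It does divide-and-conquer on a \emph{vertical} line $\ell$ and observes: any rectangle intersecting $\ell$ and containing exactly $k$ points can involve, among the points strictly between the chosen top and bottom slabs $\sigma,\tau$, only the $k$ points nearest to $\ell$ on the left and the $k$ nearest on the right. This directly shrinks each $(\sigma,\tau)$-subproblem to $O(k)$ points (maintained incrementally in $O(k+q)$ time as $\sigma$ advances), after which the structure of \lemref{subset:sum} with $q=\sqrt{k}$ gives $O((n/q)^2(kq+q^3)\log k)=O(n^2\sqrt{k}\log k)$ for the $\ell$-constrained problem; the $x$-recursion $T(n)=2\,T(n/2)+O(n^2\sqrt{k}\log k)$ preserves the bound. (Your idea \emph{is} salvageable if, instead of folding, you apply \lemref{decomp:k} separately above and below the horizontal line and enumerate all $O((n/k)^2)$ \emph{pairs} of cells --- each pair has $O(k)$ points and the same arithmetic goes through --- but that is not what you wrote.)
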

\begin{proof}
    We first consider the variant of the problem where the rectangle
    is constrained to intersect a fixed vertical line~$\ell$.  Here,
    we can follow essentially the same algorithm as in
    \thmref{subset:sum} (with minor modifications to the data
    structure in \lemref{subset:sum}), but when considering the
    subproblem for the two slabs $\sigma$ and $\tau$, we can further
    remove more irrelevant points: among the points strictly between
    $\sigma$ and $\tau$, it suffices to keep only the $k$ points
    immediately to the left of $\ell$ and the $k$ points immediately
    to the right of $\ell$.  It may be too costly to compute these
    $O(k)$ points from scratch, but given such points for
    $(\sigma',\tau)$ for the predecessor slab $\sigma'$ of $\sigma$,
    we can generate the new $O(k)$ points for $(\sigma,\tau)$, by
    invoking a selection algorithm on $O(k+q)$ elements in $O(k+q)$
    time.  The number of points in the subproblem is now reduced to
    $O(k)$.  As a consequence, the overall running time becomes
    \begin{equation*}
        O\bigl((n/q)^2 \cdot (kq\log k + q^3\log k)\bigr).
    \end{equation*}
    We set $q=\sqrt{k}$.  This gives an $O(n^2\sqrt{k}\log k)$-time
    algorithm for the restricted problem with the vertical line
    $\ell$.

    We can solve the original problem now via standard
    divide-and-conquer by $x$-coordinates, with running time given by
    the recurrence $T(n,k)=2\,T(n/2,k)+O(n^2\sqrt{k}\log k)$, which
    solves to $T(n,k)=O\bigl(n^2\Bigl.\sqrt{k}\log k\bigr)$.
\end{proof}

\IGNORE{
   \begin{lemma}
       \lemlab{merge}%
       Given a parameter $r$, there is a linear-space data structure
       to maintain a collection of sets in 1D, each of size at most
       $n$, that supports the following operations:
    
       \begin{compactitem}
           \item do predecessor/successor search in one of the sets in
           $O((n/r)\log r)$ time;
           \item merge two sets in $O(r)$ worst-case time;
           \item increase all elements in one set by a common value in
           $O(n/r)$ worst-case time.
       \end{compactitem}
    
       \noindent The preprocessing time is linear in the total sizes
       of the sets, after pre-sorting.
   \end{lemma}
   \begin{proof}
       Represent each set as a linked list of $O(n/r)$ binary search
       trees each with $r$ elements, except for one ``leftover''
       binary search tree of size $<r$.  Also store an ``offset''
       value for each binary search tree.  (Merging requires combining
       two linked lists and examining $O(r)$ elements from two
       leftover trees.  Increasing all elements by a common value
       requires only adjusting the offset of each tree.)
   \end{proof}

\begin{theorem}
    \thmlab{subset:sum}%
    Given $n$ points in the plane each with a real weight, and given a
    real number $W$ and an integer $k$, one can compute, in
    $O(n^{5/2}\log n)$ time, an axis-aligned rectangle enclosing
    exactly $k$ points with total weight closest to $W$.
\end{theorem}
\begin{proof}
    We follow the same approach as in \secref{krect}, with the changes
    noted in \secref{krect:minwt}, but with a further change: each
    fragment summary stores a set of weights (instead of just the
    minimum weight).  Each set is maintained in the data structure
    from \lemref{merge}, with $r=\sqrt{n}$, to support searches for
    any given weight in $O(\sqrt{n}\log n)$ time.  Merging fragments
    and adding a fixup value to a fragment require $O(\sqrt{n})$ time.
    The total space of the data structure over the $O(q)$ diagonals is
    $O(nq)$.

    The divide-and-conquer algorithm proceeds as before.  One
    difference is that as we go from one recursive subproblem to the
    next, we can no longer afford to make a new copy of the data
    structure ${\cal S}$, since its size is $O(nq)$, not $O(q^2)$.
    Instead of copying, we simply undo the deletion and unmarking
    operations that were performed, before going to the next recursive
    subproblem.  Undoing is possible by keeping a transcript of the
    changes made to the data structure.  Undoing does not hurt the
    running time, since the update time bounds in \lemref{merge} are
    worst-case, not amortized.

    The recurrence now becomes
    \begin{equation*}
        T(n,q) = 4\,T(n,q/2) + O(q^2\sqrt{n}),
    \end{equation*}
    with $T(n,1)=O(\sqrt{n}\log n)$ (since the base case $q=1$
    involves $O(1)$ searches in $O(1)$ fragments).  This gives
    $T(n,q)=O(n^{5/2}\log n)$.
\end{proof}

}

As an application, we can solve the following problem:
given $n$ colored points in the plane with $d$ different colors, and
integers $k_1,\ldots,k_d$, with $k_1+\cdots+k_d=k$, find an
axis-aligned rectangle enclosing exactly $k_i$ points of the $i$\th
color.
The problem was proposed by Barba \etal \cite{bdfhm-okeoc-13}, who
gave an $O(n^2k)$-time algorithm.  (It may be viewed as a geometric
variant of the jumbled or histogram indexing problem for
strings~\cite{cl-ci3ac-15}.)  It is a special case of the problem from
\thmref{subset:sum}: we can give points with color $i$ a weight of
$M^i$ for a sufficiently large $M$, e.g., $M=n+1$, and set the target
to $W=\sum_{i=1}^d k_i M^i$.  Since weights require $O(d\log n)$ bits,
each addition has $O(d)$ cost, and so the running time becomes
$O(d n^2\sqrt{k}\log k)$.  The weights can be reduced to $O(\log n)$
bits by randomized hashing (for example, by randomly selecting $M$
from $\{0,\ldots,p-1\}$ and working with numbers modulo $p$ for an
$O(\log n)$-bit prime $p$), since there are only polynomially (i.e.,
$O(n^4)$) many combinatorially different rectangles.  This way, the
running time can be reduced to $O(n^2\sqrt{k}\log k)$ -- this improves
Barba \etal's result.

\subsection{Conditional lower bounds}
\seclab{lower:bounds}

We can prove that the smallest-perimeter $k$-enclosing axis-aligned
rectangle problem do not have truly subquadratic
($i.e., O(n^{2-\delta})$) algorithms, under the conjecture that
(min,+)-convolution does not have a truly subquadratic algorithm.  Our
proof holds for the 3-sided version of the problem, which complements
nicely with our upper bound in \secref{3:sided} using
(min,+)-convolution.

We describe a reduction from the following decision problem, which
Cygan \etal \cite{cmww-pempc-17} showed does not have a truly
subquadratic algorithm under the (min,+)-convolution conjecture.

\begin{problem} \textsf{(min,+)-Convolution Decision.}
    \problab{convbound}%
    Given real numbers $a_0,\ldots,a_{n-1},b_0,\ldots,b_{n-1},$ and
    $c_0,\ldots,c_{n-1}$, decide whether
    \begin{equation*}
        \forall\ell:\ c_\ell\le \min_{i+j=\ell} (a_i+b_j).
    \end{equation*}
\end{problem}

\begin{theorem}
    If there is a $T(n)$-time algorithm for computing the
    smallest-perimeter/area axis-aligned rectangle enclosing $k$
    points for a given set of $n$ points in the plane and a given
    number~$k$ (with or without the constraint that the bottom edge
    lies on the $x$-axis), then there is an $O(T(O(n))$-time algorithm
    for \probref{convbound}.
\end{theorem}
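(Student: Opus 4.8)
The plan is to reduce \probref{convbound} to the rectangle problem. \textbf{Normalization.} First discard the indices $\ell$ on which the answer is already determined: if $c_\ell>\max_i a_i+\max_j b_j$ the universally quantified statement is false and we output ``no''; if $c_\ell\le\min_i a_i+\min_j b_j$ that constraint is vacuous and is dropped. Then apply a single order-preserving affine map to all of $a$ and $b$ (and the matching map to $c$) so that $a_i,b_j\in[0,1]$ and every surviving $c_\ell\in(0,2]$; this preserves the answer to \probref{convbound}, which is now exactly ``is $\mu:=\min\{a_i+b_j-c_{i+j}:i,j\ge 0,\ i+j\le n-1\}\ge 0$?''. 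All of this takes $O(n)$ time.

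\textbf{The (perimeter) construction.} Pick a large constant $M$ and a value $\Pi_0$ of order $Mn$ (with $\Pi_0>2M(n-1)+4$). Create $3n$ points, each with $O(1)$-time coordinates: $A_i=(-Mi-a_i-1,\,0)$ and $B_j=(Mj+b_j+1,\,0)$ for $0\le i,j\le n-1$, and $D_\ell=(0,\,\tfrac{\Pi_0}{2}-M\ell-c_\ell)$ for $0\le\ell\le n-1$; set $k=n+2$ and $\Pi:=\Pi_0+4$. The $A$'s lie on the negative $x$-axis with strictly decreasing $x$, the $B$'s on the positive $x$-axis with strictly increasing $x$, and the $D$'s on the positive $y$-axis with strictly decreasing height (all three monotonicities hold once $M$ is large enough relative to the normalized range). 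For any pair $(i,j)$, writing $\ell=i+j$, the rectangle $R_{i,j}:=[A_i.x,\,B_j.x]\times[0,\,D_\ell.y]$ encloses exactly $A_0,\dots,A_i$, $B_0,\dots,B_j$ (on its bottom edge) and $D_\ell,\dots,D_{n-1}$ (the first on its top edge) --- that is $(i+1)+(j+1)+(n-\ell)=n+2=k$ points --- and has perimeter $2(B_j.x-A_i.x)+2D_\ell.y=\Pi_0+4+2(a_i+b_j-c_{i+j})$, since the $+M\ell$ contributed by the width is cancelled exactly by the $-M\ell$ sitting in $D_\ell.y$. Hence the smallest-perimeter $k$-enclosing rectangle has perimeter $\le\Pi+2\mu$. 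Running the assumed $T(\cdot)$-time algorithm on the $O(n)$-point instance $(P,k)$ (built in $O(n)$ time) and testing whether the answer is $<\Pi$ therefore decides \probref{convbound} in $O(n)+T(O(n))=O(T(O(n)))$ time; the instance must stay linear-size precisely so that the $\Theta(n^2)$ index pairs are enumerated only implicitly, by the rectangle's two vertical edges.

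\textbf{The optimum is at least $\Pi+2\mu$.} What remains is to show that every axis-aligned rectangle enclosing $\ge k$ points has perimeter $\ge\Pi+2\mu$ (this also shows the $3$-sided and unrestricted optima coincide, since no point lies below the $x$-axis, so the bottom edge may be pushed onto it for free). The argument is a short case analysis: a rectangle whose $x$-interval avoids $0$ misses either all of $A$ or all of $B$ and all of $D$, hence has $<k$ points, so its $x$-interval contains $0$; by the monotonicities the enclosed $A$'s form an index-prefix $A_0,\dots,A_i$, the enclosed $B$'s an index-prefix $B_0,\dots,B_j$, and the enclosed $D$'s an index-suffix $D_{\ell_0},\dots,D_{n-1}$; the count being $\ge n+2$ forces $i+j\ge\ell_0$; then the width is $\ge B_j.x-A_i.x$ and the height is $\ge D_{\ell_0}.y\ge D_{i+j}.y$, so the same cancellation gives perimeter $\ge\Pi_0+4+2(a_i+b_j-c_{i+j})\ge\Pi+2\mu$. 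The degenerate sub-cases (empty prefix, or $i+j>n-1$ so that no $D$ is enclosed) are ruled out by the choice of $M,\Pi_0$, which makes those rectangles strictly larger than $\Pi$.

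\textbf{Area, and the main obstacle.} The cancellation above is additive and specific to perimeter; for the area objective it must be replaced by a multiplicative mechanism, and arranging that the minimum area comes out as a fixed constant plus a positive multiple of $\mu$ --- while still routing the sum $i+j$ into the geometry in order to select the correct $c_{i+j}$, and still making the $k$-constraint pin the rectangle to a canonical configuration --- is where I expect the real work to lie (the perimeter case, by contrast, is essentially what is written above). Once the area gadget is in place the theorem follows for both objectives and for both the $3$-sided and the unrestricted problems, yielding a conditional near-quadratic lower bound that matches \thmref{krect} up to a subpolynomial factor and, being proved for the $3$-sided version, complements the upper bound of \thmref{krect:3sided}.
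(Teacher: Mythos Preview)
Your perimeter reduction is correct and very close to the paper's: both place the $A$-points on the negative $x$-axis, the $B$-points on the positive $x$-axis, and the $D$-points on the positive $y$-axis, so that the canonical rectangle $R_{i,j}$ has perimeter equal to a constant plus $2(a_i+b_j-c_{i+j})$. The paper uses a slightly different device to force the optimal rectangle into canonical shape: instead of your scaling by a large $M$ followed by a case analysis, it adds $2M$ extra copies of the two innermost $x$-axis points and sets $k=n+2+2M$, so that any rectangle with $k$ points is forced to contain both copy-points and hence straddle the origin. Your approach trades the copies for the (routine) case analysis; both work. One small quibble: your parenthetical ``$i+j>n-1$ so that no $D$ is enclosed'' is not literally right --- some $D$'s may still be enclosed --- but the actual argument you need (and also state) is simply that the width is then already at least $Mn$, making the perimeter too large.

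The genuine gap is the area case, which you leave open as ``where the real work lies.'' In fact it is a one-line change and no harder than perimeter. Set the height of $D_\ell$ to $1/(\ell+c_\ell)$ (in the paper's unscaled coordinates the canonical width is $\ell+a_i+b_j$ with $\ell=i+j$). Then the canonical area is
\[
\frac{\ell+a_i+b_j}{\ell+c_\ell},
\]
which is $\ge 1$ iff $a_i+b_j\ge c_\ell$; hence the optimal area is $\ge 1$ iff the answer to \probref{convbound} is ``yes.'' You were looking for the area to come out as ``a fixed constant plus a positive multiple of~$\mu$,'' but the right target is a \emph{ratio} with threshold~$1$ --- that is the multiplicative mechanism you anticipated, and once you see it, the remainder of the argument is identical to your perimeter case. (One should also guard against degenerate zero-area rectangles lying entirely on the $x$-axis; this can be handled by the extra-copies device or by a tiny perturbation of the $A,B$ points off the axis.)
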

\begin{proof}
    Consider an instance of \probref{convbound}.  Without loss of
    generality, assume $a_i,b_j,c_\ell\in (0,1)$.  We create an
    instance of the minimum-perimeter $k$-enclosing rectangle problem
    with $3n$ points
    \begin{equation*}
        \{(-i-a_i,0)\}_{i=0}^{n-1}\:\cup\:
        \{(j+b_j,0)\}_{j=0}^{n-1}\:\cup\:
        \{(0,n-\ell-c_\ell)\}_{\ell=0}^{n-1},
    \end{equation*}
    plus $M$ extra copies of $(-a_0,0)$ and $M$ extra copies of
    $(b_0,0)$, and $k=n+2+2M$, where $M$ is a sufficiently large
    number, e.g., $M=2n$.

    The optimal perimeter is
    \begin{equation*}
        \min_{i,j,\ell\,:\, (i+1)+(j+1)+(n-\ell)+2M=k}
        2(i+a_i+j+b_j+n-\ell-c_\ell)
        \ =\ %
        \min_{i,j,\ell\,:\, i+j=\ell} 2(n+a_i+b_j-c_\ell),
    \end{equation*}
    which is at least $2n$ iff
    $\min_{i,j:i+j=\ell} (a_i+b_j) \ge c_\ell$ for every $\ell$.
    
    For minimum area, the reduction is similar, except that we replace
    $n-\ell-c_\ell$ with $\tfrac{1}{\ell+c_\ell}$.  The optimal area
    is
    \begin{equation*}
        \min_{i,j,\ell\,:\, (i+1)+(j+1)+(n-\ell)+2M=k}
        \frac{i+a_i+j+b_j}{\ell+c_\ell}
        \ =\ %
        \min_{i,j,\ell\,:\, i+j=\ell}
        \frac{\ell+a_i+b_j}{\ell+c_\ell},
    \end{equation*}
    which is at least 1 iff $\min_{i,j:i+j=\ell} (a_i+b_j) \ge c_\ell$
    for every $\ell$.
\end{proof}

A similar reduction holds for the minimum-weight $k$-enclosing
rectangle problem from \thmref{krect:minwt}:

\begin{theorem}
    If there is a $T(n)$-time algorithm for computing the
    minimum-weight axis-aligned rectangle enclosing $k$ points for a
    given set of $n$ weighted points in the plane and number $k$ (with
    or without the constraint that the bottom edge lies on the
    $x$-axis), then there is an $O(T(O(n))$-time algorithm for
    \probref{convbound}.
\end{theorem}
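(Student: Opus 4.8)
The plan is to recycle the reduction just given for minimum perimeter/area, with one essential change. In that reduction the numbers $a_i,b_j,c_\ell$ were encoded into the \emph{coordinates} of the points; this worked because the perimeter (and area) of a rectangle is determined by its two extreme enclosed points in each coordinate, whereas the weight of a rectangle is the sum of the weights of \emph{all} enclosed points. In the layout below the optimal rectangle necessarily encloses a whole prefix $P^A_0,\dots,P^A_I$ of the ``$a$''\nobreakdash-points and a whole prefix $P^B_0,\dots,P^B_J$ of the ``$b$''\nobreakdash-points, so storing $a_i$ as the weight of $P^A_i$ would recover $\sum_{t\le I}a_t$ rather than $a_I$. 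The fix is to store \emph{consecutive differences} as weights, so that the forced prefix (and suffix) sums telescope back to the original values.

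Concretely, assume w.l.o.g.\ that $a_i,b_j,c_\ell\in(0,1)$ (a common affine rescaling of the three sequences preserves the truth of \probref{convbound}), and set $a_{-1}=b_{-1}=c_n=0$. Fix $M=2n$ and $k=n+2+2M$, and create the weighted point set consisting of: $P^A_i=(-(i+1),0)$ with weight $a_i-a_{i-1}$ for $i=0,\dots,n-1$; $P^B_j=(j+1,0)$ with weight $b_j-b_{j-1}$ for $j=0,\dots,n-1$; $P^C_\ell=(0,\,n-\ell)$ with weight $-(c_\ell-c_{\ell+1})$ for $\ell=0,\dots,n-1$; plus $M$ zero-weight copies of $P^A_0=(-1,0)$ and $M$ zero-weight copies of $P^B_0=(1,0)$. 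This is $O(n)$ points. Exactly as in the perimeter reduction, the padding copies force any $k$-enclosing rectangle to contain at least one $a$-point and one $b$-point (otherwise it holds at most $2n+M<k$ points), hence to contain $P^A_0$, $P^B_0$ and all $2M$ copies, and to cross the column $x=0$ with bottom edge weakly below the $x$-axis. Such a rectangle then encloses precisely a prefix $P^A_0,\dots,P^A_I$, a prefix $P^B_0,\dots,P^B_J$, all $2M$ copies, and the $m$ $c$-points of smallest $y$-coordinate, namely $P^C_{n-m},\dots,P^C_{n-1}$; the count condition forces $m=n-I-J$, i.e.\ the enclosed $c$-points are $P^C_{I+J},\dots,P^C_{n-1}$. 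By telescoping, the total enclosed weight equals $a_I+b_J-c_{I+J}$ when $I+J\le n-1$, and equals $a_I+b_J>0$ in the degenerate case $I+J=n$; conversely every $(I,J)$ with $0\le I,J\le n-1$ and $I+J\le n$ is realized by some rectangle (configurations with $I+J>n$ are infeasible, as they would need $m<0$). Hence the minimum enclosed weight over all $k$-enclosing rectangles equals $\min\bigl(\min_{I+J\le n-1}(a_I+b_J-c_{I+J}),\ \text{(the positive }I+J=n\text{ terms)}\bigr)$, which is $\ge 0$ precisely when \probref{convbound} is a yes-instance. So we run the assumed $T(n)$-time algorithm on these $O(n)$ weighted points and report ``yes'' iff the returned weight is $\ge 0$, deciding \probref{convbound} in $O(T(O(n)))$ time.

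The $3$-sided variant needs no change: all $a$-points, $b$-points and padding copies lie on the $x$-axis, so restricting the bottom edge to the $x$-axis leaves the enclosed subsets (and hence the analysis) identical. The one delicate point — and the only place any thought is required — is exactly the mismatch pointed out above between ``weight of a rectangle'' (a sum over \emph{all} enclosed points) and the single term $a_i+b_j-c_\ell$ that the $(\min,+)$-convolution decision problem asks about; the difference-encoding of the weights is what reconciles the two, after which the combinatorial bookkeeping (which points are enclosed, and the choice of $M$ and $k$) is routine and parallels the perimeter case.
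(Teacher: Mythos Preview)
Your proposal is correct and essentially identical to the paper's own proof: the same difference-encoding of the weights (so that prefix sums telescope to $a_I$, $b_J$, and $-c_{I+J}$), the same point layout along the axes, the same padding with $M=2n$ zero-weight copies, and the same parameter $k=n+2+2M$. Your write-up is in fact more careful than the paper's about the degenerate case $I+J=n$ and the infeasibility of $I+J>n$, but the reduction is the same.
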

\begin{proof}
    The reduction is similar.  Assume $a_i,b_j,c_\ell\in (0,1)$.  We
    create an instance of the minimum-weight $k$-enclosing rectangle
    problem with $3n$ weighted points
    \begin{equation*}
        \{(-i,0; a_i-a_{i-1})\}_{i=0}^{n-1}\:\cup\:
        \{(j,0;b_j-b_{j-1})\}_{j=0}^{n-1}\:\cup\: \{(0,n-\ell;
        -c_\ell+c_{\ell+1})\}_{\ell=0}^{n-1},
    \end{equation*}
    plus $M$ extra copies of the points $(-1,0;0)$ and $(1,0;0)$, and
    $k=n+2+2M$, where $M$ is a sufficiently large number, e.g.,
    $M=2n$.  The third coordinate after the semicolon of each point
    denotes its weight.  (And $a_{-1}=b_{-1}=c_{n}=0$.)

    The minimum weight over all rectangles with $k$ points is equal to
    \begin{equation*}
        \min_{i,j,\ell\,:\, (i+1)+(j+1)+(n-\ell)+2M=k}
        (a_i+b_j-c_\ell)
        \ =\ 
        \min_{i,j,\ell\,:\, i+j=\ell} (a_i+b_j-c_\ell),
    \end{equation*}
    which is nonnegative iff
    $\min_{i,j:i+j=\ell} (a_i+b_j) \ge c_\ell$ for every $\ell$.
\end{proof}

A near-quadratic conditional lower bound for the minimum-weight
rectangle problem without the $k$ constraint was given by Backurs
\etal~\cite{BDT} (under a different ``popular'' conjecture about the
complexity of maximum-weight clique).

We can similarly prove that the subset-sum variant of the
$k$-enclosing rectangle problem from \thmref{subset:sum} (or its
3-sided variant) does not have truly subquadratic algorithms, under
the conjecture that the \emph{convolution-3SUM\/} problem (given real
numbers $a_0,\ldots,a_{n-1},b_0,\ldots,b_{n-1},c_0,\ldots,c_{n-1}$,
decide whether $c_\ell=a_i+b_{\ell-i}$ for some $i$ and $\ell$) does
not have a truly subquadratic algorithm (which is known to be true
under the conjecture that 3SUM for integers does not have a truly
subquadratic algorithm~\cite{Pat}).

\newcommand{\etalchar}[1]{$^{#1}$}
\providecommand{\CNFCCCG}{\CNFX{CCCG}} \providecommand{\CNFX}[1]{
   {\em{\textrm{(#1)}}}}

\appendix

\section{Shallow cutting for points and $3$-sided %
   rectangles}
\apndlab{shallow:cutting}

Here, we prove a shallow cutting lemma due to \Jorgensen and Larsen
\cite{jl-rsmt-11} -- we provide the full details for the sake of
completeness.

Let $ P$ be a set of $n$ points in general position (i.e., no two
points share $x$ or $y$ values), let $\Line$ be a horizontal line
below all the points of $ P$, and let $k$ be a parameter.  For
simplicity of exposition we assume $\Line$ is the $x$-axis. We
generate $m = O(n/k)$ subsets $ Q_1,\ldots,  Q_m$ of $ P$, each
of size $O(k)$, such that any axis-parallel rectangle $\rect$ that has
its bottom edge on $\Line$, and contains at most $k$ points of $ P$,
its associated subset of $ P$ is contained in one of these subsets.

To this end, we sweep horizontally upward from $\Line$. At any point
in time, the $x$-range is going to be split into interior disjoint
intervals (the two extreme ones are rays).  For an interval
$I = [x_1,x_2]$, its weight at time $t$ is
$n(I,t) = \cardin{  P(I,t) }\bigr.$, where
$ P(I,t) =  P \cap \pth{I \times [0,t]}$ is the active set of points
associated with $I$ at time $t$. As soon as the weight of some active
interval $J$ becomes $2k$, say at time $t$, we stop the sweep and
split $J$ into two intervals, by picking a value $m \in J$ between the
$k$\th rank and $(k+1)$\th rank $x$-coordinates of the points of
$ P(I,t)$, and breaking $I$ at $m$ into two intervals. See
\figref{sweepy}. Let $I^-, I^+$ be the two intervals adjacent to $I$
just before this split.  We store the point $ p_t = (m,t)$ into a set
of splitting points $\Spl$.  Let
$\rect( p_t) = \pth{I^- \cup I \cup I^+} \times [0,t]$ be the
rectangle associated with this split point, and add the set
$ P(I^- \cup I \cup I^+, t) =  P \cap \rect( p_t)$ to the
collection of subsets being computed.

For every two consecutive intervals $J,K$ in the final partition in
the end of the sweeping, we add the set $ P(I\cup J,+\infty)$ to the
collection. Let $\Family$ be the resulting family of sets.

\begin{figure}[h]
    \begin{tabular}{c|c|c}
      \includegraphics[page=1]{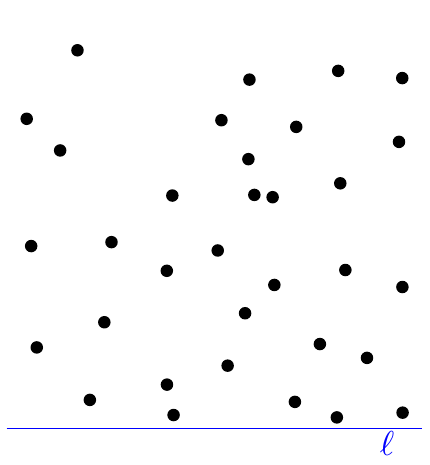}
      &
        \includegraphics[page=2]{figs/shallow}
      &
        \includegraphics[page=3]{figs/shallow}
      \\
      \hline
      \includegraphics[page=4]{figs/shallow}
      &
        \includegraphics[page=5]{figs/shallow}
      &
        \includegraphics[page=6]{figs/shallow}
      \\
      \hline
      \includegraphics[page=7]{figs/shallow}
      &
        \includegraphics[page=8]{figs/shallow}
      &
        \includegraphics[page=9]{figs/shallow}
    \end{tabular}
    \caption{An illustration of \lemref{decomp:k}. The last pane on
       the right shows the point set associated with the last interval
       being split.}
    \figlab{sweepy}
\end{figure}
\paragraph*{Analysis.}
Observe that every time we insert a point $ p$ into $\Spl$, we are
splitting a set of $2k$ points into two sets of $k$ points, Since
there are $n$ points overall, this can happen at most $n/k$
times. This readily implies that $\cardin{\Family} \leq 2 n/k$.

\begin{lemma}
    \lemlab{shallow:cutting}%
    Let $\rect$ be any rectangle having its bottom edge on $\Line$,
    such that $\cardin{\rect \cap  P} < k$. Then, there exists a set
    $ Q \in \Family$, such that $\rect \cap  P \subseteq Q$.
    Furthermore, each set of $\Family$ contains at most $6k$ points.
\end{lemma}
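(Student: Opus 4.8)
The plan is to prove the two assertions of \lemref{shallow:cutting} in turn; the size bound is immediate from the stopping rule of the sweep, while the covering property is the real content. Every member of $\Family$ is of one of two kinds. A \emph{split set} $P\cap\rect(p_t)=P(I^-\cup I\cup I^+,t)$ is recorded the instant an active interval $I$ reaches weight $2k$ at time $t$; then $n(I,t)=2k$, while each of the neighbours $I^-,I^+$ is present in the partition and hence has weight at most $2k-1$ (otherwise it would already have been split), so the set has at most $6k-2<6k$ points, with the obvious change when $I$ is one of the two extreme intervals. A \emph{boundary set} $P(J\cup K,+\infty)$ is recorded for two adjacent intervals $J,K$ of the final partition, and since a final interval never reached weight $2k$ we have $n(J,+\infty),n(K,+\infty)\le 2k-1$, so the set has fewer than $4k<6k$ points.

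For the covering property, fix $\rect=[a,b]\times[0,h]$ with $\cardin{\rect\cap P}<k$; perturbing $a$ and $b$ by an arbitrarily small amount (avoiding the finitely many $x$-coordinates of points of $P$ and of split points) we may assume $\rect\cap P$ is exactly the set of points whose $x$-coordinate lies in the open interval $(a,b)$ and whose $y$-coordinate lies in $[0,h]$. The first step is to inspect the partition at time $h$ (after all points of height at most $h$ have been swept) and to show that at most two of its intervals meet the open strip $(a,b)\times\Re$: were there three or more, the middle one would lie inside $(a,b)$, and, having been created by a split (the partition has more than one interval), it would already carry at least $k$ active points, all of them inside $\rect$, contradicting $\cardin{\rect\cap P}<k$. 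Hence at time $h$ the strip over $(a,b)$ is covered either by a single interval $K$, or by two adjacent intervals $K_p,K_{p+1}$ meeting at a point $m^*\in(a,b)$; in the two-interval case the same ``interior interval'' argument forces $K_p$ to reach left of $a$ and $K_{p+1}$ to reach right of $b$.

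The second step continues the sweep past time $h$ and asks whether it ever creates a new split point inside the relevant open interval(s): $(a,b)$ in the one-interval case, and $(a,m^*)$ or $(m^*,b)$ in the two-interval case. If it never does, then the pertinent part of $(a,b)$ remains inside a single final interval (respectively, two adjacent final intervals meeting at $m^*$), so $\rect\cap P$ is contained in the corresponding boundary set of $\Family$ -- in the one-interval case one pairs the final interval with an arbitrary neighbour, and if it has none then no split ever occurred, $n<2k$, and we simply add $P$ itself to $\Family$ (a set of fewer than $2k\le 6k$ points). Otherwise, let $t'>h$ be the first time such a split point is created; by the minimality of $t'$, just before time $t'$ the $x$-range of the recorded rectangle $\rect(p_{t'})$ still covers $[a,b]$ -- in the two-interval case because the interval being split reaches across to $m^*$, while its neighbour on the side where no boundary has yet appeared still reaches past the far edge of $\rect$ -- so $\rect=[a,b]\times[0,h]\subseteq\rect(p_{t'})$ (using $h<t'$), and $\rect\cap P$ lies in the split set recorded at $p_{t'}$.

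The crux, and the step I expect to be the main obstacle, is the two-interval case of the covering argument: there $\rect\cap P$ straddles the boundary $m^*$, and the split that created $m^*$ is useless for capturing it, since it happened before time $h$ and its rectangle is too low. One must instead single out the first \emph{later} split touching either side of $m^*$ and check that, at the instant it occurs, the opposite neighbour of the split interval still stretches past the relevant edge of $\rect$, so that the three-interval strip $I^-\cup I\cup I^+$ of that split spans all of $[a,b]$; this is exactly what minimality of $t'$ buys, once one verifies that no split point can have appeared on the far side of $m^*$ strictly before $t'$. The remaining bookkeeping (open versus closed endpoints, extreme intervals, and the degenerate $n<2k$ case) is routine.
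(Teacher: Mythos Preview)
Your proof is correct and follows essentially the same approach as the paper's. Both arguments hinge on the fact that at most one split point can lie in the strip $(a,b)$ at height $\le h$ (otherwise the interval between two such split points would contribute $\ge k$ points to $\rect$), and then use either the next split above $h$ or a final boundary pair to capture $\rect\cap P$. The paper phrases this by tracking the last active interval containing $I_\rect$ and the first two split points landing in $I_\rect$, whereas you phrase it by freezing the partition at time $h$ and then looking for the first later split inside the relevant sub-intervals; these are two views of the same case analysis, and your handling of the perturbation and the degenerate $n<2k$ case is slightly more explicit than the paper's.
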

\begin{proof}
    Let $I_\rect$ be the projection of $\rect$ to the $x$-axis. Let
    $I$ be the last active interval that contains $I_\rect$. If $I$ is
    one of the final intervals, then all the points in $I$ strip are
    contained in the set that corresponds to $I$ and its adjacent
    neighbor, and the claim immediately holds.
    
    So, let $t$ be the critical time, where $I$ was split, and let
    $ p$ be the splitting point. There are several possibilities.
    \begin{compactenumI}[label=\Roman*.]
        \item If $ p \notin \rect$, then $\rect(p_t)$ contains
        $\rect$ -- indeed, $\rect(p_t)$ $x$-axis extent contains $I$,
        which contains $I_\rect$. Now,
        $\rect \cap  P \subseteq \rect(p_t) \cap  P$ is in
        $\Family$, and the claim holds.

        \centerline{\includegraphics[page=1]{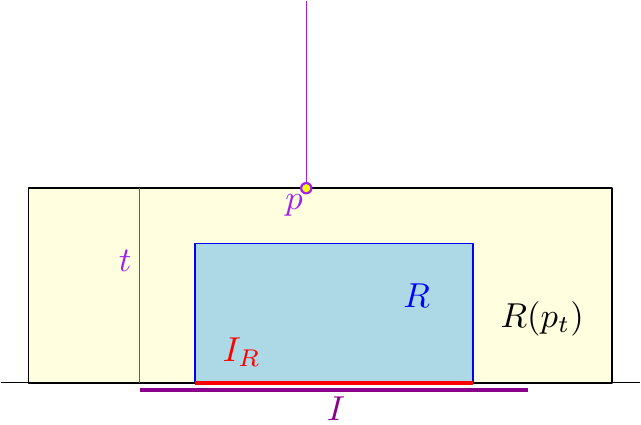}}%

        \item If $ p \in \rect$, then consider the lowest point $\pB$
        (in the $y$-direction) of $\Spl \setminus \brc{ p}$ such that
        its $x$-axis coordinate is in $I_\rect$.

        \begin{compactenumI}[label*=\roman*.]
            \item If $\pB$ does not exist, then $I_\rect$ is contained
            in two consecutive final intervals, and the claim readily
            holds.

            \centerline{\includegraphics[page=2]{figs/split_cutting}}%

            \item If $\pB \in \rect$, then the rectangle
            $\rect' = [x( p),x(\pB)] \times [0, t(\pB)] \subseteq
            \rect$ contains $k$ points of $ P$ (since it is one of the
            sides of the split created by $\pB$ -- which contains
            exactly $k$ points by construction). But that is a
            contradiction to the assumption that
            $| \rect \cap P | < k$.

            \centerline{\includegraphics[page=3]{figs/split_cutting}}%

            \item If $\pB \notin \rect$ then when the upward sweep
            line hits $\pB$, the interval $I_\rect$ is contained in
            two consecutive intervals, where one of them is being
            split by $\pB$. But then, $\rect(\pB)$ is taller than
            $\rect$, and it spans these two intervals. We conclude
            that $\rect \subseteq \rect(\pB)$, which implies the
            claim.

        \end{compactenumI}
    \end{compactenumI}

    As for the size, observe that a set in $\Family$ is the union of
    at most three active sets, and each of these active sets contains
    at most $2k$ points. We conclude that a set of $\Family$ contains
    at most $6k$ points.
\end{proof}

The running time is dominated by $O(n/k)$ queries, each of the
following form: report the lowest $O(k)$ points inside a given
vertical slab.  This is a generalization of range minimum queries, and
known data structures \cite{Brodal} achieve $O(k)$ query time after
$O(n)$ preprocessing time, assuming that the points are given in
$x$-sorted order.  The total construction time is
$O((n/k)\cdot k)=O(n)$.

We thus get the following.

\medskip

{\RestatementOf{\lemref{decomp:k}
      \cite{jl-rsmt-11}}{\LemmaShallowCutting}}

\end{document}